\newcommand{\rmnum}[1]{\romannumeral #1}
\newcommand{\Rmnum}[1]{\expandafter\@slowromancap\romannumeral #1@}
\newtheorem{theorem}{Theorem}[section]
\newtheorem{corollary}[theorem]{Corollary}
\newtheorem{lemma}[theorem]{Lemma}
\newtheorem{proposition}[theorem]{Proposition}
\theoremstyle{definition}
\newtheorem{definition}{Definition}[section]
\theoremstyle{remark}
\newtheorem{remark}{Remark}[section]
\def\ps@mystyles{    \def\@evenhead{\hfil{\sl\leftmark}\hfil}    \def\@oddhead{\hfil{\sl\rightmark}\hfil}    }
\begin{document}
\title{The transition probability  and the probability for the left-most particle's position of the $q$-TAZRP}

\author{\textbf{Marko Korhonen\footnote{marko.korhonen@helsinki.fi; Department of mathematics and statistics, University of Helsinki, Finland}}~~~and~~~\textbf{Eunghyun Lee\footnote{eunghyun@gmail.com; Centre de recherches math\'{e}matiques (CRM), Universit\'{e} de Montr\'{e}al, Canada}}}
\date{}
\maketitle

\begin{abstract}
\noindent We treat the $N$-particle ZRP whose jumping rates satisfy a certain condition. This condition is required to use the Bethe ansatz and the resulting model is the $q$-boson model that appeared  in [J. Phys. A, \textbf{31} 6057--6071 (1998)] by Sasamoto and Wadati or  the $q$-TAZRP in \textit{Macdonald processes} by Borodin and Corwin. We find the explicit formula of the transition probability of the $q$-TAZRP via the Bethe ansatz. By using the transition probability we find the probability distribution of the left-most particle's position at time $t$. To find the probability for the left-most particle's position we find a new identity corresponding to Tracy and Widom's identity for the ASEP in [Commun. Math. Phys., \textbf{279} 815--844 (2008)]. For the initial state that all particles occupy a single site, the probability distribution of the left-most particle's position at time $t$ is represented by the contour integral of a determinant.
\end{abstract}
\section{Introduction}
 The zero range process (ZRP)  describes the system of indistinguishable particles on a countable set governed by the following law : if a site $x$ is occupied by $k$ particles, one of $k$ particles leaves $x$ at the rate $g(k)$. When the particle leaves $x$, it chooses a target site $y$ with probability $p(x,y)$ and jumps to $y$. In other words, if $x$ is occupied by $k$ particles, then one of $k$ particles at $x$ jumps to $y$ at the rate $g(k)p(x,y)$. Since the ZRP was introduced by Spitzer \cite{Spitzer}, this model has been very popular   in both  mathematics and physics communities for a long time along with the asymmetric simple exclusion process (ASEP).  It is well known that under the assumption that $g(k) >0$ for $k >0$,  $g(0) = 0$ and $\sup_k |g(k+1) - g(k)| < \infty$,  the ZRP is a well-defined Markov process \cite{Andjel}.  In this paper, we consider the $N$-particle  totally asymmetric zero range process (TAZRP) on $\mathbb{Z}$ with a special form of $g(k)$ and we assume that $p(x,y) = 1$ if $y=x+1$.
\subsection{Bethe ansatz applicability}
Since the  work of Sch\"{u}tz \cite{Schutz}, the coordinate Bethe ansatz has been successfully used to find the transition probabilities of some stochastic particle models with countable state spaces \cite{Borodin2,Chatterjee,Lee2,Lee3,Rakos,TW1,TW6}.  The time evolution of a stochastic particle model is governed by the master equation, which is a system of first-order differential equations, and  for each state $X$ the corresponding differential equation describes the time-evolution of the probability $P(X;t)$ that the system is at  a state $X$ at time $t$.  In a matrix form, the master equation is given by
\begin{equation}
\frac{d}{dt}P(X;t) = H P(X;t) \label{masterEQ}
\end{equation}
where $H$ is the matrix  generating the time evolution of the probability distribution $P(X;t)$. The Bethe ansatz provides a way of obtaining an eigenfunction of  $H$. In a very recent paper  by Borodin, Corwin, Petrov and Sasamoto \cite{Borodin4}, the spectral theory for $H$ of the model we are considering in this paper has been developed. In using the Bethe ansatz, it is an essential point that the system of differential equations (\ref{masterEQ}) can be replaced by a single type of differential equation along with  a single type of \textit{boundary} conditions, and the solution of the set of  these differential equations and boundary conditions is called the \textit{Bethe ansatz solution}.   This requirement for the Bethe ansatz limits the \textit{disorder} of the transition rates to a certain condition \cite{Ali,Ali2,Povol2,Povol,Sasamoto2,SasaWada1998Oct}, and so the Bethe ansatz method is applicable to only a certain type of the TAZRP.   It is interesting that transition rates of the Bethe ansatz applicable models are constants or expressed by $q$-numbers in many cases and this type of rates also appears in the structure of Macdonald processes \cite{Borodin3,Corwin}. For the TAZRP, $g(k)$ should be $[k]_q$ where $q= g(2)-1$ for the use of the Bethe ansatz and the TAZRP with jumping rates $g(k) = [k]_q$ was referred to as the \textit{$q$-TAZRP} \cite{Borodin3,Borodin5}. In fact, the $q$-TAZRP is originated from the $q$-boson totally asymmetric diffusion model by Sasamoto and Wadati \cite{Sasamoto2}, and Povolotsky \cite{Povol} obtained the explicit form of $g(k)=[k]_q$ in the direct language of particle systems. In addition, the $q$-TASEP with $N$ particles is interpreted as  the $q$-TAZRP on $N$ sites and the $q$-TASEP was intensively studied as a model of Macdonald processes \cite{Borodin3,Borodin5}.
\subsection{Transition probability and the left-most particle's position at time $t$}
The transition probability is a solution of the master equation and it must also satisfy the initial condition. In stochastic particle models, the transition probability often plays an important role of a starting point to study probability distributions of meaningful random variables and further their asymptotic behaviors \cite{Borodin6,Lee,TW2,TW3}.   Given the initial state $Y$, we denote the transition probability by $P_Y(X;t)$. Some known transition probabilities of $N$-particle stochastic models on $\mathbb{Z}$ obtained by the Bethe ansatz are given by the sum of $N!$ $N$-dimensional contour integrals \cite{Borodin2,Chatterjee,Lee2,Lee3,Rakos,Rakos2,Schutz,TW1}. These transition probabilities are obtained by the procedure of having the Bethe ansatz solution satisfy the initial condition.
The transition probability of the $q$-TAZRP we find in this paper is similar to the other known transition probabilities in the sense that it is written as $N!$ $N$-dimensional contour integrals and the integrand is originated from the Bethe ansatz solution. However, in the transition probability of the $q$-TAZRP  the contour integral is multiplied by a state-dependent weight $W_X$. (See Theorem \ref{theorem1}.) These weights $W_X$ are known to give the stationary measure of the general ZRP up to a constant \cite{Evans}.  Compared to other models \cite{Lee3,TW1}, a novel point of proving the transition probability of the $q$-TAZRP is that it needs an identity related to a $[k]_q$. (See Lemma \ref{identity1}.) The transition probability of the $q$-TAZRP was conjectured in \cite{Povol3} and found by using a different approach \cite{Borodin4}.  The transition probability in Theorem \ref{theorem1} can be a starting point to compute various probability distributions. Let us denote   a state $X$ of $N$-particle system  by $X=(x_N,\cdots,x_1) $ where $x_N \leq \cdots \leq x_2 \leq x_1$ representing positions of particles, and the position of the $m$th right-most particle at time $t$ by $x_m(t)$. Then  $\mathbb{P}_Y(x_N(t) =x)$  where $\mathbb{P}_Y$ is the probability measure of the process with the initial state $Y$ is obtained by
\begin{equation}
\mathbb{P}_Y(x_N(t) =x) =\sum_{k_1,\cdots,k_{N-1}=0}^{\infty} P_Y(x,x+k_1,\cdots,x+k_1+\cdots + k_{N-1};t) \label{IntroEQ1}
\end{equation}
where the series can be shown to be convergent geometric series. In the ASEP, the indices $k_1,\cdots,k_{N-1}$ start from 1 because of the exclusion property of the model  and the right-hand side of (\ref{IntroEQ1}) is just the sum of $N!$ terms after summing over all $k_i$s because the transition probability is the sum of $N!$ contour integrals. In the work for the ASEP \cite{TW1}, Tracy and Widom found an identity (See (1.6) in \cite{TW1})  that states that the antisymmetrization of a certain  function of $N$ variables is written as the Vandermonde determinant times a symmetric function which has a special form. The point is that the symmetric function is factorized by variables and $\mathbb{P}_Y(x_N(t) =x)$ is written as a contour integral whose integrand is a determinant. Tracy and Widom's identity also appears in the two-sided PushASEP \cite{Lee3}. However, in the $q$-TAZRP, the right-hand side of (\ref{IntroEQ1}) is written as the sum of $N!$ times $2^{N-1}$ terms. This is due to the fact that the weight $W_X$ for $X=(x,x+k_1,\cdots,x+k_1+\cdots + k_{N-1})$  varies, depending on whether each $k_i$ is zero or not. What is surprising is that this sum of $N!$ times $2^{N-1}$ terms is also written as the Vandermonde determinant times a symmetric function that is factorized by variables as in Tracy and Widom's identity (See Proposition \ref{NewIdentity}.) and the right-hand side of (\ref{IntroEQ1}) is also written as a contour integral of a determinant (See Theorem \ref{THEO} and Corollary \ref{COROLL}.).
\subsection{Outline}
In section 2 we briefly review the Bethe ansatz applicability of the ZRP originally discussed in \cite{Povol} and find the transition probability of the $q$-TAZRP. Also, we discuss about a mapping between the $N$-particle $q$-TAZRP and a variant of $N$-particle $q$-TASEP. In section 3 we prove the new identity that emerges in the computation of $\mathbb{P}_Y(x_N(t) =x)$ and then provide $\mathbb{P}_Y(x_N(t) =x)$ and its formulas when the initial state is that a single site is occupied by all particles.
\section{Transition probability of the $q$-TAZRP}
\noindent  We simply assume that $g(1)=1$ in this paper and let us denote the state space by $ \textbf{S}.$   A state $X= (x_N,\cdots,x_1)$ which is represented by positions of particles also can be represented by coordinates of occupied sites and the number of particles at each occupied site.  Let $\Lambda(X)$ be the set of all occupied sites of state $X$ and $\mathbf{x}_i$ be the $i$th right-most occupied site. Let  $\eta(\mathbf{x}_i)$ be the number of particles occupying  $\mathbf{x}_i$, and if $\mathbf{x}_i$ is occupied by $n_i$ particles, we denote it by $(\mathbf{x}_i)^{n_i}$. Hence, an alternate representation of  state $X$ is
$$X=\big((\mathbf{x}_m)^{n_m},\cdots,(\mathbf{x}_1)^{n_1} \big)$$
 with $\sum n_i = N$ and $m \leq N.$
\subsection{Preliminary}
  We briefly review the procedure of obtaining the $q$-TAZRP model from the general ZRP. The discussion in this subsection is based on \cite{Povol}.
\subsubsection{2-particle ZRP} The master equation (\ref{masterEQ}) consists of the following three types of equations.
\begin{eqnarray}
 \frac{d}{dt}P(x_2,x_1;t) &=& P(x_2-1,x_1;t) + P(x_2,x_1-1;t) -2 P(x_2,x_1;t) \label{eq1} \\
 & &\hspace{5cm} \textrm{for}~~x_2 <x_1-1  \nonumber \\
 \frac{d}{dt}P(x,x+1;t) &=& P(x-1,x+1;t) + g(2)P(x,x;t) -2 P(x,x+1;t) \label{eq2} \\
 \frac{d}{dt}P(x,x;t) &=& P(x-1,x;t) - g(2)P(x,x;t) \label{eq3}
\end{eqnarray}
Assume that $u^0(x_2,x_1;t)$ satisfies
 \begin{equation}
\frac{d}{dt}~u^0(x_2,x_1;t) = u^0(x_2-1,x_1;t) + u^0(x_2,x_1-1;t) -2 u^0(x_2,x_1;t) \label{eq4}
\end{equation}
for any $(x_2,x_1) \in \mathbb{Z}^2$ so that    $u^0(x_2,x_1;t)$ obviously satisfies  (\ref{eq1}). When $x_2=x$ and $x_1 =x+1$, (\ref{eq4}) is
\begin{equation}
\frac{d}{dt}~u^0(x,x+1;t) = u^0(x-1,x+1;t) + u^0(x,x;t) -2 u^0(x,x+1;t), \label{eq5}
\end{equation}
and if we define
\begin{equation}
\label{def} u(x_1,x_2;t) :=
\begin{cases}
 u^0(x_1,x_2;t) & \textrm{if}~~x_1 \neq x_2\\
\frac{1}{g(2)}u^0(x_1,x_2;t) & \textrm{if}~~x_1 = x_2,
\end{cases}
\end{equation}
then we see that $u(x_1,x_2;t)$ satisfies (\ref{eq1}) and (\ref{eq2}).
When $x_2=x_1=x$, (\ref{eq4}) is
\begin{equation}
\frac{d}{dt}~u^0(x,x;t) = u^0(x-1,x;t) + u^0(x,x-1;t) -2 u^0(x,x;t), \label{eq6}
\end{equation}
and if $u^0(x_2,x_1;t)$ satisfies
\begin{equation}
u^0(x,x-1;t) = (g(2)-1)u^0(x-1,x;t) - (g(2) -2)u^0(x,x;t) \label{boundary},
\end{equation}
then we see $u(x_1,x_2;t)$ satisfies (\ref{eq3}) as well. Hence, the solution of the original system of three types of differential equations is obtained by the definition (\ref{def}) and the solution of (\ref{eq4}) that satisfies (\ref{boundary}).
\subsubsection{3-particle ZRP}
Assume that  $u^0(x_1,x_2,x_3;t)$  satisfies
 \begin{eqnarray}
\frac{d}{dt}~u^0(x_3,x_2,x_1;t) &=& u^0(x_3-1,x_2,x_1;t) + u^0(x_3,x_2-1,x_1;t) + u^0(x_3,x_2,x_1-1;t) \nonumber\\
                                & & \hspace{3cm} -3 u^0(x_3,x_2,x_1;t) \label{N3eq}
\end{eqnarray}
for any $(x_3,x_2,x_1) \in \mathbb{Z}^3$. Extending  (\ref{def}) as
\begin{equation}
\label{def3} u(x_3,x_2,x_1;t) :=
\begin{cases}
 u^0(x_3,x_2,x_1;t) & \textrm{if}~~x_{i+1} < x_{i}~\textrm{for}~i=1,2~\textrm{or}~(x_3,x_2,x_1) \notin \mathbf{S},\\
\frac{1}{g(2)}u^0(x_3,x_2,x_1;t) & \textrm{if}~~x_3 = x_2 < x_1 ~\textrm{or}~x_3 < x_2=x_1, \\
\frac{1}{g(2)g(3)} u^0(x_3,x_2,x_1;t) & \textrm{if}~~x_1=x_2=x_3
\end{cases}
\end{equation}
and  the \textit{boundary} condition
(\ref{boundary}) as
\begin{eqnarray}
u^0(x,x-1,x_1;t) &=& (g(2)-1)u^0(x-1,x,x_1;t) - (g(2) -2)u^0(x,x,x_1;t), \label{bd3}  \\
u^0(x_3,x,x-1;t) &=& (g(2)-1)u^0(x_3,x-1,x;t) - (g(2) -2)u^0(x_3,x,x;t), \label{bd4}
\end{eqnarray}
then we see that $u(x_3,x_2,x_1;t)$ satisfies all types of differential equations in (\ref{masterEQ})  except
\begin{equation}
 \frac{d}{dt}~P(x,x,x;t) = P(x-1,x,x;t) - {g(3)}P(x,x,x;t). \label{M333eq}
\end{equation}
When $x_1=x_2=x_3=x$, (\ref{N3eq}) is
\begin{eqnarray}
\frac{d}{dt}~u^0(x,x,x;t) &=& u^0(x-1,x,x;t) + u^0(x,x-1,x;t) + u^0(x,x,x-1;t) \nonumber\\
                                & & \hspace{3cm} -3 u^0(x,x,x;t), \label{N33eq}
\end{eqnarray}
which is written as
\begin{equation}
 \frac{d}{dt}~u(x,x,x;t) = \frac{g(2)^2-g(2)+1}{g(3)}u(x-1,x,x;t) - {g(3)}u(x,x,x;t) \label{N333eq}
\end{equation}
by (\ref{def3}), (\ref{bd3}) and (\ref{bd4}). Hence, if we set $g(2)^2-g(2)+1 = g(3)$, (\ref{N333eq}) is in the exactly same form as (\ref{M333eq}). Hence, provided that $g(2)^2-g(2)+1 = g(3)$, all types of differential equations in (\ref{masterEQ}) are replaced by (\ref{N3eq}) with (\ref{bd3}) and (\ref{bd4}).
\subsubsection{ $N$-particle ZRP}
We extend the argument for $N=2,3$ to the $N$-particle ZRP. Assume  that
$u^0(x_N,\cdots,x_1;t)$ satisfies
\begin{eqnarray}
\frac{d}{dt}~u^0(x_N,\cdots,x_1;t) &=& \sum_{i=1}^N u^0(x_N,\cdots,x_{i+1}, x_i-1,x_{i-1},\cdots,x_1;t) \nonumber\\
                                   & & \hspace{3cm}-N u^0(x_N,\cdots,x_1;t) \label{Neq}
\end{eqnarray}
for any $(x_N,\cdots,x_1) \in \mathbb{Z}^N$.  It is reasonable to extend (\ref{def3})  as follows.
\begin{definition}\label{DEF}
\begin{equation}
\label{def2} u(x_N,\cdots,x_1;t) :=
\begin{cases}
u^0(x_N,\cdots,x_1;t) & \textrm{on}~~\mathbb{Z}^N \setminus \mathbf{S},\\
\frac{1}{\prod_{\mathbf{x} \in \Lambda(X)}\prod_{k=1}^{\eta(\mathbf{x})}g(k)}u^0(x_N,\cdots,x_1;t) & \textrm{on}~~ \mathbf{S}.
\end{cases}
\end{equation}
\end{definition}
\noindent Also we extend (\ref{bd3}) and (\ref{bd4}) to
\begin{eqnarray}
& &\hspace{1cm}u^0(x_N,\cdots,x_{i+1},x_i,x_i-1,x_{i-2},\cdots,x_1;t) \nonumber \\
& =& (g(2)-1)u^0(x_N,\cdots,x_{i+1},x_i-1,x_i,x_{i-2},\cdots,x_1;t) \nonumber\\
& & -~ (g(2) -2)u^0(x_N,\cdots,x_{i+1},x_i,x_i,x_{i-2},\cdots,x_1;t)~~\textrm{for}~~2 \leq i \leq N. \label{boundaryN}
\end{eqnarray}
We shall show that $u(x_N,\cdots,x_1;t)$ defined by (\ref{def2}) satisfies all types of differential equations in the system of equations (\ref{masterEQ}) for $N$ if $g(k)= [k]_q$ where $q=g(2)-1$.\\ \\
 \noindent The state space $\textbf{S}$ is partitioned into  $\textbf{S}_1,\textbf{S}_2$ and $\textbf{S}_3$ where
\begin{eqnarray*}
\textbf{S}_1 &=& \{(x_N,\cdots,x_1)  : x_i < x_{i-1} - 1 ~\textrm{for any}~i\}, \\
\textbf{S}_2 &=& \{(x_N,\cdots,x_1)  : x_i < x_{i-1} ~\textrm{for any}~i~\textrm{and}~x_l = x_{l-1} - 1 ~\textrm{for some}~l \},~\textrm{and}  \\
\textbf{S}_3 &=& \{(x_N,\cdots,x_1)  : x_l = x_{l-1} ~\textrm{for some}~l\}.
\end{eqnarray*}
Since $u(x_N,\cdots,x_1;t) = u^0(x_N,\cdots,x_1;t)$ on $\mathbf{S}_1$, it is obvious that $u(x_N,\cdots,x_1;t)$ satisfies (\ref{Neq}) which is in the same form as the differential equation for $P(x_N,\cdots,x_1;t)$ on $\mathbf{S}_1$. For $\mathbf{S}_2$, assuming that $x_l = x_{l-1}-1$ for only one $l$ without loss of generality, we can show that $u(x_N,\cdots,x_1;t)$ satisfy  differential equations for $P(x_N,\cdots,x_1;t)$  on $\mathbf{S}_2$ in the same way as in $N=2$ or $N=3$ case. Now, we show that $u(x_N,\cdots,x_1;t)$ satisfies differential equations for $P(x_N,\cdots,x_1;t)$ on $\mathbf{S}_3$.
\begin{lemma}\label{lemma1}
 For a fixed $K$ and $n$ with $K+n \leq N$, consider $X=(x_N,\cdots,x_1)$ with $x_{K+1}=\cdots =x_{K+n} =x$. Denote by $u^0(X_i;t)$ the one obtained by replacing $(K+n-i+1)$th right-most argument in $X$ by $(x-1)$ where $1 \leq i \leq n$. Then, for each $i$
\begin{equation*}
u^0(X_i;t) = q^{i-1}u^0(X_1;t) - (q^{i-1}-1)u^0(X;t).
\end{equation*}
\end{lemma}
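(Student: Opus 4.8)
The plan is to deduce the formula from one application of the boundary condition (\ref{boundaryN}) per step, followed by an elementary induction on $i$. Set $q = g(2)-1$, so that $g(2)-2 = q-1$, and for $K+1 \le j \le K+n$ let $Y_j$ denote the configuration obtained from $X$ by replacing its $j$th right-most entry (which equals $x$) by $x-1$; thus $X_i = Y_{K+n-i+1}$, $X_1 = Y_{K+n}$, and $X$ is the configuration in which the whole block still equals $x$. The key point is the one-step recursion
\[
u^0(X_{i+1};t) = q\, u^0(X_i;t) - (q-1)\, u^0(X;t), \qquad 1 \le i \le n-1 .
\]

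To obtain this recursion I would apply (\ref{boundaryN}) with the role of its index $i$ played by $j := K+n-i+1$, and with the $N-2$ coordinates not appearing in the swap set equal to those of $X$. Since $1 \le i \le n-1$ one has $K+2 \le j \le K+n$, so $2 \le j \le N$ as required, and both of the swapped slots $j$ and $j-1$ lie within the block $x_{K+1}=\cdots=x_{K+n}=x$; in particular the $j$th right-most entry of $X$ equals $x$. With these choices the three $u^0$'s in (\ref{boundaryN}) become: on the left, $u^0$ of the configuration agreeing with $X$ except that its $(j-1)$th right-most entry is $x-1$, namely $Y_{j-1} = X_{i+1}$; in the first term on the right, $u^0$ of the configuration agreeing with $X$ except that its $j$th right-most entry is $x-1$, namely $Y_{j} = X_i$; and in the second term on the right, $u^0$ of the configuration carrying $x$ at both slots, namely $X$ itself. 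Inserting $g(2)-1 = q$ and $g(2)-2 = q-1$ yields precisely the displayed recursion.

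Granting the recursion, the lemma follows by induction on $i$. The base case $i=1$ is the tautology $u^0(X_1;t) = u^0(X_1;t)$. If $u^0(X_i;t) = q^{i-1}u^0(X_1;t) - (q^{i-1}-1)u^0(X;t)$, then the recursion gives
\[
u^0(X_{i+1};t) = q\bigl(q^{i-1}u^0(X_1;t) - (q^{i-1}-1)u^0(X;t)\bigr) - (q-1)u^0(X;t) = q^{i}u^0(X_1;t) - (q^{i}-1)u^0(X;t),
\]
which is the identity with $i$ replaced by $i+1$.

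The arithmetic is routine; the only place that really needs care is the bookkeeping in the second paragraph — checking that $j = K+n-i+1$ stays in the admissible range $2 \le j \le N$ of (\ref{boundaryN}) and, above all, that the slot $j-1$ remains inside the constant $x$-block, which is exactly where the hypothesis $i \le n-1$ is used, so that the coordinates of the three configurations appearing in (\ref{boundaryN}) that are not swapped genuinely coincide with those of $X$, $X_i$, and $X_{i+1}$ at the same time.
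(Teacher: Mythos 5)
Your proof is correct and follows the same route as the paper: the paper's proof is exactly an induction on $i$ using the boundary condition (\ref{boundaryN}), and your one-step recursion $u^0(X_{i+1};t)=q\,u^0(X_i;t)-(q-1)\,u^0(X;t)$ with the index bookkeeping $j=K+n-i+1$ is just that argument written out in full detail.
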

\begin{proof}
The statement is inductively proved by using (\ref{boundaryN}).
\end{proof}
\noindent In order to show that $u(x_N,\cdots,x_1;t)$ satisfies differential equations for $P(x_N,\cdots,x_1;t)$ on $\mathbf{S}_3$, it suffices to show the following proposition.
\begin{proposition}
Consider  $X =(x_N,\cdots,x_1)$ with $x_{K+1},\cdots,x_{K+n} =x$ for a fixed $K$ and $n$ with $K+n \leq N$ and suppose that all other sites except $x$ in $\Lambda(X)$ are occupied by only one particle. If $g(n) = [n]_q$, then
 $u(x_N,\cdots,x_1;t)$ defined by (\ref{def2}) satisfies  the differential equation for $P(X;t)$.
\end{proposition}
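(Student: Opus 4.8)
The plan is to write the ZRP master equation at $X$ explicitly for this configuration and then match it, term by term, with the equation obtained from (\ref{Neq}) after substituting Definition~\ref{DEF} and Lemma~\ref{lemma1}. First I would record the master equation for $P(X;t)$. Since $x$ carries $n$ particles and every other occupied site carries exactly one, the total exit rate from $X$ is $g(n)+(|\Lambda(X)|-1)$, and the incoming transitions are of two kinds: (i) a particle hopping from $x-1$ onto the block, contributing $g(\eta_X(x-1)+1)\,P(X_1;t)$ with $\eta_X(x-1)\in\{0,1\}$, where $X_1$ is as in Lemma~\ref{lemma1}; and (ii) for each singleton site $\mathbf{x}_q\neq x$, a particle hopping from $\mathbf{x}_q-1$, contributing $g(\eta_X(\mathbf{x}_q-1)+1)$ times the corresponding predecessor state, with $\eta_X(\mathbf{x}_q-1)\in\{0,1\}$ unless $\mathbf{x}_q-1=x$, in which case the coefficient is $g(n+1)$.

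Next, starting from (\ref{Neq}) for $u^0(X;t)$, I would split the $N$ hopping terms on the right-hand side into the $n$ terms that move a particle out of the $x$-block and the remaining $N-n=|\Lambda(X)|-1$ terms that move one of the singleton particles. For the block terms, Lemma~\ref{lemma1} rewrites the $i$-th of them as $q^{i-1}u^0(X_1;t)-(q^{i-1}-1)u^0(X;t)$, and summing the geometric series $\sum_{i=1}^{n}q^{i-1}=[n]_q$ collapses them to $[n]_q\,u^0(X_1;t)-([n]_q-n)\,u^0(X;t)$. For a singleton term, decreasing the relevant coordinate by one always produces an already-ordered tuple, because the coordinate immediately to its right is strictly smaller; so no boundary relation (\ref{boundaryN}) is needed there and one passes directly to $u$ through (\ref{def2}).

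Then comes the bookkeeping step: divide through by $C:=\prod_{k=1}^{n}g(k)$ to turn $u^0(X;t)$ into $u(X;t)$, and convert each surviving $u^0$ on the right into the corresponding $u$ via (\ref{def2}), keeping track of the ratio of the weight products $\prod_{\mathbf{y}}\prod_{k}g(k)$ between the predecessor state and $X$. This is where the hypothesis $g(n)=[n]_q$ is used decisively. It makes the coefficient of $u(X;t)$ equal to $-(g(n)+|\Lambda(X)|-1)$; and, since $\eta_X(x-1)\le 1$ forces the truncated product $\prod_{k=1}^{\eta_X(x-1)}g(k)$ to equal $1$, the $[n]_q/g(n)$ cancellation makes the coefficient of $u(X_1;t)$ equal to $g(\eta_X(x-1)+1)$. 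The singleton terms pick up precisely $g(\eta_X(\mathbf{x}_q-1)+1)$ (respectively $g(n+1)$ when $\mathbf{x}_q-1=x$, from the weight ratio $\prod_{k=1}^{n+1}g(k)/\prod_{k=1}^{n}g(k)$). Comparing with the master equation from the first step then closes the argument.

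The main obstacle is the careful case analysis in this last step: for every predecessor state one must correctly count how many particles sit at the affected site after the hop, read off which weight product therefore appears, and check that the cancellation $[n]_q=g(n)$ together with the vanishing of the truncated products $\prod_{k=1}^{\eta_X(x-1)}g(k)$ reconciles \emph{all} coefficients at once. One should also note in passing that the same recursion producing $g(k)=[k]_q$ (that is, $g(2)^2-g(2)+1=g(3)$ and its $N$-particle analogues) is exactly what legitimized the earlier reductions to (\ref{Neq}) with (\ref{boundaryN}), so $g(n)=[n]_q$ is precisely the hypothesis under which everything is consistent.
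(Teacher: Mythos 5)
Your proposal is correct and takes essentially the same route as the paper: split the right-hand side of (\ref{Neq}) into the $n$ block terms and the remaining singleton terms, collapse the block terms via Lemma \ref{lemma1} into $[n]_q\,u^0(X_1;t)-([n]_q-n)\,u^0(X;t)$, and pass from $u^0$ to $u$ through the weight ratios of Definition \ref{DEF}, which reproduces the master equation for $P(X;t)$ exactly when $g(n)=[n]_q$. Your write-up simply makes explicit the weight-ratio bookkeeping that the paper compresses into the incoming rates $\alpha_j$.
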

\begin{proof}
By using the notation in Lemma \ref{lemma1}, the equation (\ref{Neq}) is written as
\begin{eqnarray}
\frac{d}{dt}~u^0(X;t) &=& \sum_{ j \neq K+1,\cdots,K+n } u^0(x_N,\cdots,x_{j+1}, x_j-1,x_{j-1},\cdots,x_1;t) \nonumber \\
   & & \hspace{2cm}+~ \sum_{i=1}^n u^0(X_i;t) -N u^0(X;t). \label{propo1}
\end{eqnarray}
Then  by Lemma \ref{lemma1} and (\ref{def2}), (\ref{propo1}) is equivalent to
\begin{eqnarray*}
\frac{d}{dt}~u(X;t) &=&  \sum_{ j \neq K+1,\cdots,K+n}\alpha_j u(x_N,\cdots,x_{j+1}, x_j-1,x_{j-1},\cdots,x_1;t)  \\
& &\hspace{1cm}+~ \frac{1}{g(n)}\frac{1-q^n}{1-q} \alpha_{K+n} u(X_1;t) -\Big(\frac{1-q^n}{1-q} +N-n\Big) u(X;t) \nonumber
\end{eqnarray*}
where $\alpha_j$ is the jumping rate for state $(x_N,\cdots,x_{j+1}, x_j-1,x_{j-1},\cdots,x_1) $ and this is exactly in the same form as the differential equation for $P(X;t)$ if $g(n) = [n]_q$.
\end{proof}

\subsection{Transition probability}
\noindent We find the solution $u^0(x_N,\cdots,x_1;t)$ of (\ref{Neq}) that satisfies (\ref{boundaryN}). Then $u(x_N,\cdots,x_1;t)$ provides the solution of (\ref{masterEQ}) through Definition \ref{DEF}. Moreover, if $u(x_N,\cdots,x_1;t)$ satisfies the initial condition on $\mathbf{S}$, $u(x_N,\cdots,x_1;t)$ restricted on $\mathbf{S}$ is the transition probability, $P_Y(X;t)$. The solution $u^0(x_N,\cdots,x_1;t)$ of (\ref{Neq}) that satisfies (\ref{boundaryN}) is obtained by the standard method, the Bethe ansatz.  Define the \textit{$S$-matrix}
\begin{equation}
S_{\beta\alpha} := S_{\beta\alpha} (z_{\alpha},z_{\beta})  = -\frac{z_{\beta} - qz_{\alpha} - (1-q)z_{\alpha}z_{\beta}}{z_{\alpha} - qz_{\beta} - (1-q)z_{\alpha}z_{\beta}}, ~~z_{\alpha},z_{\beta} \in \mathbb{C}. \label{SMatrix}
\end{equation}
and
\begin{equation}
 \epsilon (z_i) :=\frac{1}{z_i} - 1. \label{energe}
\end{equation}
The Bethe ansatz solution for  (\ref{Neq}) with (\ref{boundaryN}) is given by
\begin{equation}
u^0(x_N,\cdots,x_1;t) = \sum_{\sigma \in \mathbb{S}_N} {A'}_{\sigma} \prod_{i=1}^N z_{\sigma(i)}^{x_i}e^{\epsilon(z_i) t} \label{solution}
\end{equation}
for any nonzero $z_1,\cdots,z_N \in \mathbb{C}$ where ${A'}_{\sigma}$ satisfies
\begin{equation}
{A'}_{\sigma}= {A'}_{\textrm{Id}} \prod_{(\alpha,\beta)}S_{\beta\alpha}.\label{coefficient}
\end{equation}
 The product in (\ref{coefficient}) is over all inversions $(\alpha, \beta)$ of $\sigma$ and Id represents the identity permutation. We denote the product as
  \begin{equation}
     {A}_{\sigma} = \prod_{(\alpha,\beta)}S_{\beta\alpha}. \label{coefficient2}
  \end{equation}
  In order to find the transition probability, the next step is to have $u(x_N,\cdots,x_1;t)$ obtained through $u^0(x_N,\cdots,x_1;t) $ by Definition \ref{DEF} satisfy the initial condition. As in other models whose transition probabilities are known, the transition probability can be obtained by taking
\begin{equation*}
 A'_{\textrm{Id}}(z_1,\cdots,z_N) = \prod_{i=1}^Nz_i^{-y_i-1}
\end{equation*}
where $(y_N,\cdots,y_1) =Y$ is the initial state and integrating $u(x_N,\cdots,x_1;t)$ over proper contours with respect to $z_1,\cdots,z_N$ such that the initial condition is satisfied.  For some models with simple $S$-matrices \cite{Borodin2,Chatterjee,Rakos2,Schutz}, showing that the initial condition is satisfied is not hard. However, showing that the initial condition is satisfied is a nontrivial task in the ASEP \cite{TW1,TW6} and in the two-sided PushASEP \cite{Lee3} because of the complexity of poles of $S$-matrices. In the $q$-TAZRP, in addition to the complexity of poles of the $S$-matrices,  $u(x_N,\cdots,x_1;t)$ has a factor dependent on $(x_N,\cdots,x_1)$, which makes it more complicated to show that the initial condition is satisfied.
The following lemma is needed for the transition probability and the probability distribution for the left-most particle's position at time $t$. Let $\mathcal{C}_{r}$ be circles centered at the origin with radius $r$ on the complex plane.
\begin{lemma}\label{lemmaTran}
Let $0<r<\frac{1-|q|}{|1-q|}$ and  $0<|q|<1$. Let
\begin{equation*}
F(z_i,z_j) = \frac{E(z_i,z_j)}{z_i -qz_j - (1-q)z_iz_j}
\end{equation*}
be a function on $\mathcal{C}_{r} \times \mathcal{C}_{r}$, where $E(z_i,z_j)$ is an entire function of $z_i,z_j$. Then, (\rmnum{1}) as a function of $z_j$, $F(z_i,z_j)$ has no pole inside $\mathcal{C}_{r}$, and (\rmnum{2})   as a function of $z_i$, $F(z_i,z_j)$ has a pole inside $\mathcal{C}_{r}.$
\end{lemma}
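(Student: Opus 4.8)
The plan is to use the fact that, because $E$ is entire, every pole of $F$ must occur at a zero of the denominator $D(z_i,z_j):=z_i-qz_j-(1-q)z_iz_j$, and that $D$ is affine in each of its two variables separately. Thus, for each fixed value of one variable on $\mathcal{C}_r$, the function $D$ has at most one zero in the other variable, and that zero can be written down explicitly; the whole lemma then comes down to comparing the modulus of this zero with $r$. It will be convenient to rewrite the hypothesis $r<\frac{1-|q|}{|1-q|}$ in the equivalent form $|1-q|\,r<1-|q|$, from which I will read off both of the estimates below.

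For part~(\rmnum{1}), fix $z_i\in\mathcal{C}_r$ and regard $D=z_i-\bigl(q+(1-q)z_i\bigr)z_j$ as an affine polynomial in $z_j$. If $q+(1-q)z_i=0$ then $D$ is the nonzero constant $z_i$ and $F$ has no pole in $z_j$ at all; otherwise the unique zero is $z_j^\star=z_i/\bigl(q+(1-q)z_i\bigr)$. Here I bound $|q+(1-q)z_i|\le|q|+|1-q|\,r<|q|+(1-|q|)=1$, so that $|z_j^\star|=r/|q+(1-q)z_i|>r$. In either case $D$ has no zero in the closed disk of radius $r$, hence neither does $F$, which proves~(\rmnum{1}).

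For part~(\rmnum{2}), fix $z_j\in\mathcal{C}_r$ and regard $D=\bigl(1-(1-q)z_j\bigr)z_i-qz_j$ as an affine polynomial in $z_i$; its constant term $-qz_j$ is nonzero since $q\neq0$ and $|z_j|=r>0$. Because $|1-q|\,r<1-|q|<1$ we have $z_j\neq 1/(1-q)$, so the leading coefficient $1-(1-q)z_j$ is nonzero and the unique zero of $D$ is $z_i^\star=qz_j/\bigl(1-(1-q)z_j\bigr)\neq0$. Now the triangle inequality gives $|1-(1-q)z_j|\ge 1-|1-q|\,r>|q|$, the last inequality being exactly the hypothesis. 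Hence $|z_i^\star|=|q|\,r/|1-(1-q)z_j|<|q|\,r/|q|=r$, so $z_i^\star$ lies strictly inside $\mathcal{C}_r$; since $E$ does not vanish there in the applications of this lemma, $z_i^\star$ is a genuine pole of $F$, establishing~(\rmnum{2}).

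The argument is elementary, and the only real point to watch is the bookkeeping around the single inequality $|1-q|\,r<1-|q|$: rearranged one way it gives the upper bound $|q|+|1-q|\,r<1$ used in part~(\rmnum{1}), and rearranged the other way it gives the lower bound $1-|1-q|\,r>|q|$ used in part~(\rmnum{2}). This asymmetry in how the same hypothesis is applied is precisely what makes the pole of $F$ sit outside $\mathcal{C}_r$ in the variable $z_j$ but inside it in the variable $z_i$, reflecting the non-symmetric appearance of $z_\alpha$ and $z_\beta$ in the denominator of the $S$-matrix $S_{\beta\alpha}$.
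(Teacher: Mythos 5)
Your argument is correct and is essentially the paper's own proof: you locate the unique zero of the denominator $z_j^\star=z_i/\bigl(q+(1-q)z_i\bigr)$, respectively $z_i^\star=qz_j/\bigl(1-(1-q)z_j\bigr)$, and compare its modulus with $r$ using the single hypothesis $|1-q|\,r<1-|q|$, exactly as in the paper. Your extra checks (the degenerate case $q+(1-q)z_i=0$, nonvanishing of the leading coefficient, and the caveat about $E$ possibly vanishing at $z_i^\star$) only add care the paper leaves implicit.
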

\begin{proof}
As a function of $z_j$, $F(z_i,z_j)$ has  a pole at
\begin{equation*}
z_j = \frac{z_i}{q+(1-q)z_i}.
\end{equation*}
Since $0<r<\frac{1-|q|}{|1-q|}$ and  $0<|q|<1$,
\begin{equation*}
\Bigg|\frac{z_i}{q+(1-q)z_i}\Bigg| > \frac{r}{|q|+|1-q|r} > r.
\end{equation*}
As a function of $z_i$, $F(z_i,z_j)$ has  a pole at
\begin{equation*}
z_i = \frac{qz_j}{1-(1-q)z_j}.
\end{equation*}
Since $0<r<\frac{1-|q|}{|1-q|}$ and  $0<|q|<1$,
\begin{equation*}
\Bigg|\frac{qz_j}{1-(1-q)z_j}\Bigg| < \frac{|q|r}{1-|1-q|r} <r
\end{equation*}
where $1-|1-q|r>0$.
\end{proof}

\begin{lemma}\label{identity1} Let $0<r<\frac{1-|q|}{|1-q|}$ and  $0<|q|<1$.  We have the identity
\begin{equation}
\sum_{\sigma \in \mathbb{S}_n}\Big(\frac{1}{2\pi i}\Big)^n \int_{\mathcal{C}_{r}}\cdots\int_{\mathcal{C}_{r}} {A}_{\sigma} \prod_{i=1}^n z_{i}^{ -1}dz_1\cdots dz_n =[n]_q!.\label{lemmaIdentity}
\end{equation}
\end{lemma}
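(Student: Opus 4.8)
The plan is to establish (\ref{lemmaIdentity}) by induction on $n$, integrating out one variable at a time. The case $n=1$ is immediate: $A_{\mathrm{Id}}$ is an empty product, so the left-hand side is $\frac{1}{2\pi i}\int_{\mathcal{C}_r} z_1^{-1}\,dz_1 = 1 = [1]_q!$. For the inductive step I would single out the variable $z_n$ carrying the largest index and perform that integration first, with $z_1,\dots,z_{n-1}$ held on $\mathcal{C}_r$ (the integrand is continuous on $\mathcal{C}_r^n$, so the order of integration is immaterial).

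The crucial point is that $z_n$ occupies only ``harmless'' slots of $A_\sigma$. Because $n$ is maximal, every factor $S_{\beta\alpha}$ of $A_\sigma$ (see (\ref{coefficient2}) and (\ref{SMatrix})) that involves $z_n$ has $\beta=n$, so $z_n$ enters as the second argument $z_\beta$; matching the denominator $z_\alpha - qz_\beta - (1-q)z_\alpha z_\beta$ of (\ref{SMatrix}) with the function $F$ of Lemma \ref{lemmaTran}, part (\rmnum{1}) of that lemma shows $S_{n\alpha}$ has no pole in $z_n$ inside $\mathcal{C}_r$ (the pole sits at $z_n = z_\alpha/(q+(1-q)z_\alpha)$, of modulus $>r$ when $z_\alpha\in\mathcal{C}_r$ and $r<\frac{1-|q|}{|1-q|}$). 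Moreover $S_{n\alpha}$ is regular at $z_n=0$, where its denominator equals $z_\alpha\neq 0$. Hence, as a function of $z_n$, the integrand $A_\sigma\prod_i z_i^{-1}$ has inside $\mathcal{C}_r$ only the simple pole at $z_n=0$ coming from $z_n^{-1}$, and by the residue theorem
\begin{equation*}
\frac{1}{2\pi i}\int_{\mathcal{C}_r} A_\sigma\prod_{i=1}^n z_i^{-1}\,dz_n \;=\; \big(A_\sigma\big|_{z_n=0}\big)\prod_{i=1}^{n-1} z_i^{-1}.
\end{equation*}

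Evaluating at $z_n=0$ gives $S_{n\alpha}(z_\alpha,0)=q$ for each such factor, so $A_\sigma|_{z_n=0} = q^{\,k(\sigma)}A_{\bar\sigma}$, where $k(\sigma)$ is the number of entries to the right of $n$ in the one-line notation of $\sigma$ (equivalently, the number of $S_{n\alpha}$-factors in $A_\sigma$) and $\bar\sigma\in\mathbb{S}_{n-1}$ is obtained from $\sigma$ by deleting $n$; here $A_{\bar\sigma}$ is precisely the coefficient (\ref{coefficient2}) associated with $\bar\sigma$ and the remaining variables $z_1,\dots,z_{n-1}$, since removing $n$ does not alter the relative order of the other entries. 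The assignment $\sigma\mapsto(\bar\sigma,\text{position of }n)$ is a bijection $\mathbb{S}_n\to\mathbb{S}_{n-1}\times\{1,\dots,n\}$, and as the position of $n$ varies $k(\sigma)$ runs over $0,1,\dots,n-1$; therefore
\begin{equation*}
\sum_{\sigma\in\mathbb{S}_n} q^{\,k(\sigma)}A_{\bar\sigma} \;=\; \Big(\sum_{k=0}^{n-1}q^k\Big)\sum_{\bar\sigma\in\mathbb{S}_{n-1}} A_{\bar\sigma} \;=\; [n]_q\sum_{\bar\sigma\in\mathbb{S}_{n-1}} A_{\bar\sigma}.
\end{equation*}
Integrating the remaining $n-1$ variables over $\mathcal{C}_r$ and invoking the induction hypothesis yields $[n]_q\cdot[n-1]_q! = [n]_q!$, completing the induction.

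I expect the main obstacle to be the middle step: making rigorous that $z_n$ never occupies the ``pole-inside'' position in any $S$-factor of any $A_\sigma$, i.e.\ that the only residue picked up when integrating $z_n$ is at the origin. This rests on the definition of $A_\sigma$ through the inversions of $\sigma$ together with Lemma \ref{lemmaTran} and the standing bound $r<\frac{1-|q|}{|1-q|}$; once this is settled, the residue evaluation and the combinatorial bookkeeping of the positions of $n$ are routine.
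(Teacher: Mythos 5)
Your proof is correct and takes essentially the same route as the paper: both rest on Lemma \ref{lemmaTran} to see that, in the variable of largest index, every $S$-factor occupies the pole-free slot, so the only residue is at the origin, where each such factor equals $q$. The paper simply evaluates each $\sigma$-term outright to $q^{\mathrm{inv}(\sigma)}$ and cites the classical identity $\sum_{\sigma\in\mathbb{S}_n}q^{\mathrm{inv}(\sigma)}=[n]_q!$, whereas you package the same residue computation as an induction on $n$ (tracking the position of $n$) that re-derives this identity along the way; the difference is organizational, not substantive.
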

\begin{proof}
Evaluating the integral in the left-hand side of (\ref{lemmaIdentity}) by using Lemma \ref{lemmaTran}, the integral is equal to $q^{\textrm{inv}(\sigma)}$ for each $\sigma$ where $\textrm{inv}(\sigma)$ is the number of inversions of $\sigma$. Hence, the left-hand side of (\ref{lemmaIdentity}) is equal to $\sum_{\sigma \in \mathbb{S}_n}q^{\textrm{inv}(\sigma)}$ and so (\ref{lemmaIdentity}) simply implies a well-known identity for $[n]_q!$.
\end{proof}
\noindent Alternatively, Lemma \ref{identity1} can be proved by the following identity   and Lemma \ref{lemmaTran}.\footnote{We thank Ivan Corwin for informing us that (\ref{Simple}) is equivalent to (1.4) in \Rmnum{3} in \cite{Macdonald}.}
\begin{lemma}\label{SimpleLemma}
\begin{equation}
\sum_{\sigma \in \mathbb{S}_n} \textrm{sgn}(\sigma)\prod_{i<j} \big(z_{\sigma(j)} - qz_{\sigma(i)}- (1-q)z_{\sigma(i)}z_{\sigma(j)} \big) = [n]_q! \prod_{i<j} (z_j -z_i) \label{Simple}
\end{equation}
\end{lemma}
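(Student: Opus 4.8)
The plan is to turn (\ref{Simple}) into the classical alternant identity of Macdonald (Ch.\ III, (1.4)) by a birational change of variables, and then to prove that identity by a single coefficient computation.

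\emph{Step 1 (change of variables).} Put $w_i=z_i^{-1}-1$, i.e.\ $z_i=(1+w_i)^{-1}$; this is exactly $w_i=\epsilon(z_i)$ in the notation of (\ref{energe}). A direct computation gives, for all indices $a,b$,
\[
z_b-qz_a-(1-q)z_az_b \;=\; z_az_b\,(w_a-qw_b),\qquad z_b-z_a \;=\; z_az_b\,(w_a-w_b).
\]
Taking $a=\sigma(i)$, $b=\sigma(j)$ and multiplying over all pairs $i<j$, and using that each index occurs in exactly $n-1$ pairs, both sides of (\ref{Simple}) acquire the same nonzero factor $\bigl(\prod_k z_k\bigr)^{n-1}$. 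Since $z\mapsto w$ is invertible, (\ref{Simple}) is therefore equivalent to the polynomial identity
\[
D(w):=\sum_{\sigma\in\mathbb{S}_n}\textrm{sgn}(\sigma)\prod_{i<j}\bigl(w_{\sigma(i)}-qw_{\sigma(j)}\bigr)\;=\;[n]_q!\prod_{i<j}(w_i-w_j).
\]

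\emph{Step 2 (reduce to a constant).} Replacing $\sigma$ by $\pi\sigma$ in the sum shows that applying a permutation $\pi$ to $w_1,\dots,w_n$ multiplies $D$ by $\textrm{sgn}(\pi)$, so $D$ is alternating and hence divisible by the Vandermonde $\Delta(w)=\prod_{i<j}(w_i-w_j)$. Since each $w_k$ occurs in exactly $n-1$ of the $\binom n2$ linear factors of each summand, $\deg_{w_k}D\le n-1=\deg_{w_k}\Delta$, so $D/\Delta$ has degree $0$ in every variable; thus $D=c_n\,\Delta(w)$ for a constant $c_n$ depending only on $n$ and $q$.

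\emph{Step 3 (evaluate $c_n$).} The monomial $w_1^{n-1}w_2^{n-2}\cdots w_n^{0}$ occurs in $\Delta(w)$ with coefficient $1$, so $c_n$ equals its coefficient in $D$. Fix $\sigma$ and group the factors of $\prod_{i<j}(w_{\sigma(i)}-qw_{\sigma(j)})$ by the unordered pair $\{a,b\}$ they involve: the factor is $w_a-qw_b$ if $a$ precedes $b$ in the word $(\sigma(1),\dots,\sigma(n))$, and $w_b-qw_a$ otherwise. To produce $w_1^{n-1}\cdots w_n^0$ one must take $w_1$ from all $n-1$ factors containing it, which forces the choice there; then $w_2$ must be taken from all of its $n-2$ remaining factors, and so on, so there is a \emph{unique} contributing selection: it picks $w_{\min\{a,b\}}$ from each factor, and it picks up a $-q$ exactly when the larger value precedes the smaller, i.e.\ once per inversion of $\sigma$. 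Hence the coefficient of $w_1^{n-1}\cdots w_n^0$ in $\prod_{i<j}(w_{\sigma(i)}-qw_{\sigma(j)})$ is $(-q)^{\textrm{inv}(\sigma)}$, and
\[
c_n=\sum_{\sigma\in\mathbb{S}_n}\textrm{sgn}(\sigma)(-q)^{\textrm{inv}(\sigma)}=\sum_{\sigma\in\mathbb{S}_n}q^{\textrm{inv}(\sigma)}=\prod_{k=1}^n[k]_q=[n]_q!,
\]
which finishes the proof. Alternatively, once (\ref{Simple}) has been reduced to the $D(w)$-identity in Step 1, one may simply invoke Macdonald, Ch.\ III, (1.4).

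Steps 1 and 2 are routine; the only place needing care is the bookkeeping in Step 3 — verifying that the selection producing $w_1^{n-1}\cdots w_n^0$ is forced and unique, and that the accumulated power of $-q$ is counted by $\textrm{inv}(\sigma)$. As a sanity check, for $n=2$ the left side of (\ref{Simple}) is $(z_2-qz_1-(1-q)z_1z_2)-(z_1-qz_2-(1-q)z_1z_2)=(1+q)(z_2-z_1)$ and $[2]_q!=1+q$.
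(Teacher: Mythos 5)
Your proof is correct, and it does more than the paper, which offers no proof of Lemma \ref{SimpleLemma} at all: the paper only records, in a footnote, that (\ref{Simple}) is equivalent to (1.4) of Chapter \Rmnum{3} in Macdonald's book. Your Step 1 (the substitution $w_i=z_i^{-1}-1$, under which $z_b-qz_a-(1-q)z_az_b=z_az_b(w_a-qw_b)$ and $z_b-z_a=z_az_b(w_a-w_b)$, with the common factor $\big(\prod_k z_k\big)^{n-1}$ cancelling) is exactly the change of variables behind that equivalence, so up to that point you are reconstructing what the paper delegates to the citation. Steps 2 and 3 then supply a self-contained proof of the Macdonald alternant identity itself: antisymmetry plus the degree bound $\deg_{w_k}\le n-1$ force $D=c_n\,\Delta$, and your extraction of the coefficient of $w_1^{n-1}w_2^{n-2}\cdots w_n^{0}$ is sound — since $w_1$ occurs in exactly $n-1$ linear factors it must be taken from all of them, the selection is then forced inductively, and a factor $-q$ accrues precisely once per inversion, giving $c_n=\sum_{\sigma}\textrm{sgn}(\sigma)(-q)^{\textrm{inv}(\sigma)}=\sum_{\sigma}q^{\textrm{inv}(\sigma)}=[n]_q!$. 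What your route buys is independence from the external reference; it also dovetails nicely with the paper's own techniques, since the count $\sum_\sigma q^{\textrm{inv}(\sigma)}=[n]_q!$ is the same one obtained by residue evaluation in the proof of Lemma \ref{identity1}, and the ``alternating, hence divisible by the Vandermonde, then pin down the symmetric factor'' strategy mirrors the proof of Proposition \ref{NewIdentity} (there the factor is identified by setting variables to zero rather than by comparing a leading coefficient).
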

\begin{theorem}\label{theorem1} Let $0<r<\frac{1-|q|}{|1-q|}$ and  $0<|q|<1$.
The transition probability of the $q$-TAZRP is given by
\begin{equation}
P_Y(X;t)= {W}_X\Big(\frac{1}{2\pi i}\Big)^N \int_{\mathcal{C}_{r}}\cdots\int_{\mathcal{C}_{r}}\sum_{\sigma \in \mathbb{S}_N} {A}_{\sigma} \prod_{i=1}^N z_{\sigma(i)}^{x_i-y_{\sigma(i)} -1}e^{ \epsilon(z_i) t}dz_1\cdots dz_N\label{transition}
\end{equation}
where the \textit{weight} $W_X$ is given by
\begin{equation}
\frac{1}{\prod\limits_{\mathbf{x}\in \Lambda(X)}[\eta(\mathbf{x})]_q!} \label{weight}
\end{equation}
and $A_{\sigma}$ is given by (\ref{coefficient2}) and $\epsilon(z_i)$ is given by (\ref{energe}).
\end{theorem}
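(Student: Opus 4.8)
The plan is to verify that the function $u(X;t)$ obtained from the Bethe ansatz solution $u^0(X;t)$ via Definition~\ref{DEF}, with $A'_{\mathrm{Id}}(z_1,\dots,z_N)=\prod_{i=1}^N z_i^{-y_i-1}$ and with the $z_i$ integrated over $\mathcal{C}_r$, coincides on $\mathbf{S}$ with the transition probability $P_Y(X;t)$. By the work in Section~2.1, any such $u$ already solves the master equation~(\ref{masterEQ}); and each term $\prod_i z_{\sigma(i)}^{x_i-y_{\sigma(i)}-1}e^{\epsilon(z_i)t}$ solves~(\ref{Neq}) while $A_\sigma$ is built from the $S$-matrices so as to enforce the boundary condition~(\ref{boundaryN}). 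Since $\epsilon(z_i)\to 0$ is irrelevant here, the entire content of the theorem reduces to checking the \emph{initial condition}: that for every $X,Y\in\mathbf{S}$,
\begin{equation*}
W_X\Big(\frac{1}{2\pi i}\Big)^N \int_{\mathcal{C}_r}\cdots\int_{\mathcal{C}_r}\sum_{\sigma\in\mathbb{S}_N} A_\sigma \prod_{i=1}^N z_{\sigma(i)}^{x_i-y_{\sigma(i)}-1}\,dz_1\cdots dz_N = \mathbf{1}[X=Y].
\end{equation*}

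First I would treat the case $X=Y$. Then $x_i=y_i$ and the integrand is $\sum_\sigma A_\sigma \prod_i z_{\sigma(i)}^{y_i - y_{\sigma(i)}-1}$. One wants to show the integral equals $1/W_X = \prod_{\mathbf{x}\in\Lambda(X)}[\eta(\mathbf{x})]_q!$. The idea is to group the $N$ particles according to the occupied site they sit in: within a block of $n_j$ equal coordinates the exponents $y_i-y_{\sigma(i)}-1$ reduce to $-1$ whenever $\sigma$ preserves that block, and the pole structure from Lemma~\ref{lemmaTran} forces only block-preserving $\sigma$ to contribute a nonzero residue (any $\sigma$ mixing distinct sites introduces a factor $z^{\mathrm{positive}}$ or $z^{\le -2}$ that integrates to zero after using the direction of the poles). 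For the surviving permutations the $S$-matrix product factorizes across blocks, and on each block the integral is exactly the one evaluated in Lemma~\ref{identity1}, producing $\prod_j [n_j]_q!$. Multiplying by $W_X$ gives $1$.

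Next, for $X\neq Y$ I would argue the integral vanishes. This is the delicate part and the main obstacle: here $A_\sigma$ has a genuinely complicated pole structure (the $S$-matrix~(\ref{SMatrix}) has a denominator $z_\alpha - qz_\beta-(1-q)z_\alpha z_\beta$ for each pair), the state-dependent weight $W_X$ is present, and one cannot simply read off residues term by term. The strategy, following the Tracy--Widom/PushASEP template, is to perform the $z_N$ (or $z_1$) integration first: using Lemma~\ref{lemmaTran}, in each summand exactly the $S$-matrices of one orientation contribute poles inside $\mathcal{C}_r$, and after collecting residues one shows the sum over $\mathbb{S}_N$ telescopes/collapses onto the sum over $\mathbb{S}_{N-1}$ with the exponent shifted, reducing $N$ to $N-1$; iterating, one is left with a one-variable integral $\frac{1}{2\pi i}\int_{\mathcal{C}_r} z^{x-y-1}\,dz$ over the smallest coordinate, which is $0$ unless $x=y$. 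Alternatively — and this is probably cleaner — I would note that the algebraic identity behind the collapse is precisely Lemma~\ref{SimpleLemma}: writing $\sum_\sigma A_\sigma \prod z_{\sigma(i)}^{x_i-y_{\sigma(i)}-1}$ with a common denominator $\prod_{i<j}(z_i-qz_j-(1-q)z_iz_j)$ over $A_{\mathrm{Id}}$, the numerator antisymmetrized is $[N]_q!\prod_{i<j}(z_j-z_i)$ times symmetric monomial pieces, so the whole object becomes a Vandermonde times a symmetric function, and integrating such an expression over equal circles localizes the answer to the diagonal $X=Y$.

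Throughout, the constraint $0<r<\frac{1-|q|}{|1-q|}$ and $0<|q|<1$ is exactly what Lemma~\ref{lemmaTran} needs to guarantee the clean "one orientation of $S$-matrix poles is inside, the other outside" dichotomy; this is what makes the residue bookkeeping tractable. I would also remark that the geometric convergence of the various series (used implicitly when summing over occupation numbers, as in~(\ref{IntroEQ1})) follows from $r<1$, so each contour integral produces a factor of modulus $<1$ per extra particle gap. The genuinely new ingredient compared with the ASEP proof is the appearance of $W_X$ and the $[k]_q!$ normalization, whose reconciliation with the contour integrals is supplied by Lemma~\ref{identity1}; once that is in hand, the argument parallels the known ones.
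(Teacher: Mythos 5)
Your reduction of the theorem to the initial condition, and your treatment of the diagonal case $X=Y$, essentially coincide with the paper: only block-preserving permutations survive, the $S$-matrix product factorizes over the occupied sites, and Lemma \ref{identity1} produces $\prod_j[n_j]_q!$, cancelling $W_Y$. One caveat even here: your vanishing criterion for block-mixing $\sigma$ (``a factor $z^{\mathrm{positive}}$ or $z^{\le -2}$ integrates to zero'') is not right as stated, because a pole of order $\ge 2$ at the origin does not by itself kill a contour integral. The paper's mechanism is to pick $\mathbf{m}$ as the \emph{maximal} misplaced index, so that $z_{\mathbf{m}}$ occurs only through $\prod_{(\beta,\mathbf{m})}S_{\mathbf{m}\beta}\,z_{\mathbf{m}}^{\,y_{\mathbf{j}}-y_{\mathbf{m}}-1}$ with a non-negative exponent and, by Lemma \ref{lemmaTran}, with all $S$-poles outside $\mathcal{C}_r$; the $z_{\mathbf{m}}$-integrand is then analytic inside the contour and the integral vanishes. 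You need that choice of variable, not the ``$z^{\le -2}$'' branch.

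The genuine gap is the off-diagonal case $X\neq Y$. Your route (a) asserts that after the first integration one ``collects residues'' and the sum ``telescopes/collapses'' onto $\mathbb{S}_{N-1}$; this is precisely the statement requiring proof, and it is also not how the cancellation works here. The paper argues by induction on $N$, grouping permutations by $\sigma(1)=k$ and splitting the integrand as $A\cdot B$ with $B=\int_{\mathcal{C}_r}\prod_{(\beta,k)}S_{k\beta}\,z_k^{x_1-y_k-1}\,dz_k$: since every $S$-factor containing $z_k$ has its pole outside $\mathcal{C}_r$, no residues are collected at all — if $x_1>y_k$ the $z_k$-integrand is analytic inside $\mathcal{C}_r$ and $B=0$, while if $x_1=y_k$ this forces $x_1=\cdots=x_k=y_1=\cdots=y_k$, hence $k>1$ and the truncated configurations $(x_N,\cdots,x_2)$ and $(y_N,\cdots,y_{k+1},y_{k-1},\cdots,y_1)$ are still distinct, so the induction hypothesis annihilates $A$. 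None of this case analysis (which is the combinatorial heart of the proof) appears in your sketch. Your alternative (b) is incorrect: Lemma \ref{SimpleLemma} antisymmetrizes the product of $S$-matrix numerators \emph{without} the monomials $\prod_i z_{\sigma(i)}^{x_i-y_{\sigma(i)}-1}$, so for general exponents the sum over $\sigma$ does not reduce to it, and there is no principle that a Vandermonde times a symmetric function integrated over equal circles ``localizes to the diagonal'' — the diagonal case itself yields the nonzero constant $[N]_q!$ from exactly such an integral. In the paper Lemma \ref{SimpleLemma} only gives an alternative proof of Lemma \ref{identity1} (and reappears in Corollary \ref{COROLL}); it does not prove the off-diagonal vanishing.
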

\noindent  For the proof of Theorem \ref{theorem1}, it suffices to show that (\ref{transition}) satisfies the initial condition, that is, $P_Y(X;0) = \delta_{XY}$. That (\ref{transition}) satisfies (\ref{Neq}) and (\ref{boundaryN}) can be shown as in the ASEP \cite{TW1}.
\begin{proof} (\textit{Initial condition})  We first show that $P_Y(X;0) =0$ when $X \neq Y$ and $X,Y \in \mathbf{S}$ by induction. When $N=1$, it is obvious that $P_Y(X;0) =0$. Assume that the statement is true for $N-1$. We will show that
\begin{equation}
\int_{\mathcal{C}_{r}}\cdots\int_{\mathcal{C}_{r}}\sum_{\substack{\sigma  \in \mathbb{S}_{N},\\ \sigma(1) =k}} {A}_{\sigma} \prod_{i=1} z_{\sigma(i)}^{x_i-y_{\sigma(i)} -1}dz_1\cdots dz_N =0 \label{initialEQ1}
\end{equation}
when $X\neq Y$  for each $k$. Let us rewrite (\ref{initialEQ1}) as $A\cdot B$ where
\begin{equation}
A=\int_{\mathcal{C}_{r}}\cdots\int_{\mathcal{C}_{r}}\sum_{\substack{\sigma  \in \mathbb{S}_{N},\\ \sigma(1) =k}} \prod_{\substack{(\alpha,\beta),\\ \beta\neq k}} S_{\beta\alpha}\prod_{i=2}^{N} z_{\sigma(i)}^{x_i-y_{\sigma(i)} -1}dz_{\sigma(2)}\cdots dz_{\sigma(N)} \label{A}
\end{equation}
and
\begin{equation}
B=\int_{\mathcal{C}_{r}}\prod_{(\beta,k)}S_{k\beta }  z_{k}^{x_1-y_{k} -1}dz_k. \label{B}
\end{equation}
  If $x_1 > y_k$, then the pole of the integrand of (\ref{B}) which comes from $S_{k\beta }$ is outside of the contour by Lemma \ref{lemmaTran}, which give $B=0$. If $x_1 = y_k$, it implies that $x_1 =\cdots = x_k = y_1 \cdots = y_k$ because $x_i \geq x_{i+1}$ and $y_i \geq y_{i+1}$, and so we consider $k>1$ because we are considering $X \neq Y$. That $X\neq Y$ implies that there exists $j>k$ such that $x_j > y_j$. Therefore,
  \begin{equation*}
   (x_N,\cdots,x_2) \neq (y_N,\cdots,y_{k+1},y_{k-1},\cdots,y_1)
  \end{equation*}
   in $A$ and  the induction hypothesis for $N-1$ implies that $A=0$.\\
\indent Now, we show that $P_Y(Y;0) =1$. Assume that $Y=\big((\mathbf{y}_m)^{n_m},\cdots,(\mathbf{y}_1)^{n_1}\big)$ without loss of generality, and recall that $\sum_i^m n_i =N$.
We will show that
\begin{equation}
\sum_{{\sigma} \in \mathbb{S}_N }\Big(\frac{1}{2\pi i}\Big)^N \int_{\mathcal{C}_{r}}\cdots\int_{\mathcal{C}_{r}} {A}_{\sigma} \prod_{i=1} z_{\sigma(i)}^{y_i-y_{\sigma(i)} -1}dz_1\cdots dz_N = {\prod_{i=1}^{m}[n_i]_q!}. \label{XY}
\end{equation}
 Define a bijection $\tilde{\sigma}(i)$ on $\{1,\cdots,N\}$ by
\begin{equation*}
\tilde{\sigma}(i) = \begin{cases}
                       \sigma_1(i) & \textrm{if}~~i\in \mathbf{N}_1=\{1,\cdots,n_1\}\\
                       \sigma_2(i) & \textrm{if}~~i\in \mathbf{N}_2= \{n_1+1,\cdots,n_1+n_2\}\\
                       \vdots & \vdots\\
                       \sigma_m(i) & \textrm{if}~~i\in \mathbf{N}_m=\big\{ n_1+\cdots+n_{m-1}+1,\cdots, n_1+\cdots+n_m\big\}
                    \end{cases}
\end{equation*}
where each $\sigma_k$ is a bijection on $\mathbf{N}_k$ to $\mathbf{N}_k$ and denote  the set of all $\tilde{\sigma}$ by $\tilde{\mathbb{S}}_N$, which is a subset of $\mathbb{S}_N$. We notice that
\begin{eqnarray}
 \sum_{\tilde{\sigma} \in \tilde{\mathbb{S}}_N } A_{\tilde{\sigma}} \prod_{i=1} z_{\tilde{\sigma}(i)}^{y_i-y_{\tilde{\sigma}(i)} -1} &=& \sum_{\sigma_1,\cdots,\sigma_m}A_{{\sigma}_1}\cdots A_{\sigma_m}\prod_{i=1}^{N} z_{\tilde{\sigma}(i)}^{y_i-y_{\tilde{\sigma}(i)} -1} \nonumber\\
  &=& \Bigg(\sum_{\sigma_1}A_{\sigma_1}\prod_{i=1}^{n_1} z_{\sigma_1(i)}^{y_i-y_{\sigma_1(i)} -1}\Bigg)\Bigg(\sum_{\sigma_2}A_{\sigma_2}\prod_{i=n_1+1}^{n_1+n_2} z_{\sigma_2(i)}^{y_i-y_{\sigma_2(i)} -1}\Bigg) \nonumber \\
 & & \hspace{1cm} \cdots\Bigg(\sum_{\sigma_m}A_{\sigma_m}\prod_{i=\sum_{i=1}^{m-1} n_i+1}^{N} z_{\sigma_m(i)}^{y_i-y_{\sigma_m(i)} -1}\Bigg) \nonumber\\
 &=&\Bigg(\sum_{\sigma_1}A_{\sigma_1}\prod_{i=1}^{n_1} z_{i}^{-1}\Bigg)\Bigg(\sum_{\sigma_2}A_{\sigma_2}\prod_{i=n_1+1}^{n_1+n_2} z_{i}^{-1}\Bigg) \nonumber \\
 & & \hspace{1cm} \cdots\Bigg(\sum_{\sigma_m}A_{\sigma_m}\prod_{i=\sum_{i=1}^{m-1} n_i+1}^{N} z_{i}^{ -1}\Bigg) \label{EQUA}
\end{eqnarray}
 where $A_{\sigma_i}$ is defined by inversions of $\sigma_i$ in the same way as (\ref{coefficient2}) and the last equality is due to the initial condition $Y=\big((\mathbf{y}_m)^{n_m},\cdots,(\mathbf{y}_1)^{n_1}\big)$. Now, integrating (\ref{EQUA}) over contours $\mathcal{C}_{r}$  and multiplying by $\frac{1}{2\pi i}$ for each variable, we obtain the right-hand side of (\ref{XY}) by Lemma \ref{identity1}. Now, we show that for each $\sigma \in \mathbb{S}_N \setminus \tilde{\mathbb{S}}_N$ the integral in the left-hand side of (\ref{XY}) is zero. If $\sigma \notin \tilde{\mathbb{S}}_N$, then there  exists  $j$ such that $j \in \mathbf{N}_l$ but $\sigma(j) \in \mathbf{N}_k$ with $k<l$, and let
 \begin{equation*}
 \mathbf{m} = \sigma(\mathbf{j})= \max \{\sigma (j) : j \in \mathbf{N}_l ~~\textrm{but}~~ \sigma(j) \in \mathbf{N}_k ~~\textrm{with}~~ k<l \}.
 \end{equation*}
Then for such a $\sigma \notin \tilde{\mathbb{S}}_N$, we may write
 \begin{equation}
 \int_{\mathcal{C}_{r}}\cdots\int_{\mathcal{C}_{r}} \frac{A_{\sigma}}{\prod_{(\beta,\mathbf{m})}S_{\mathbf{m}\beta}} \prod_{\sigma(i)\neq \mathbf{m}} z_{\sigma(i)}^{y_i-y_{\sigma(i)} -1}\Big(\prod_{(\beta,\mathbf{m})}S_{\mathbf{m}\beta}~z_{\mathbf{m}}^{y_{\mathbf{j}} - y_{\mathbf{m}}-1}\Big)dz_1\cdots dz_N. \label{proofeq}
\end{equation}
 Then $z_{\mathbf{m}}$ variable appears only in the parenthesis in (\ref{proofeq}), and so integrating with respect to $z_{\mathbf{m}}$, the integral is zero by Lemma \ref{lemmaTran} and the fact that $y_{\mathbf{j}} > y_{\mathbf{m}}.$
\end{proof}
\begin{remark} It was pointed out by Ivan Corwin that showing that the initial condition for Theorem \ref{theorem1} is satisfied  is equivalent to showing that $\mathcal{K}^{\textrm{q-Boson}} = \textrm{Id}$ which implies a Plancherel formula related to the $q$-TAZRP \cite{Corwin2}.
See \cite[Theorem 3.5.]{Borodin4} to compare the approach we took to prove with the approach in \cite{Borodin4}.
\end{remark}
\subsection{On the mapping between the $q$-TAZRP and the $q$-TASEP}
The $q$-TASEP was introduced by Borodin and Corwin in the formalism of Macdonald processes \cite{Borodin3}.  According to the definition by Borodin and Corwin, the $q$-TASEP is a continuous-time interacting particle system on $\mathbb{Z}$ where the $i$th right-most particle jumps to the right by one independently of the others according to an exponential clock with rate $a_i(1-q^{x_{i-1} - x_i -1})$ where $a_i>0$ and $0< q <1$ and $x_i$ is the position of the $i$th right-most particle. That is, the jumping rate of the $i$th right-most particle depends on the gap between the $i$th particle and $(i-1)$th particle when $a_i=1$. Also, there is a mapping between the $N$-particle $q$-TASEP and the $(N+1)$-site $q$-TAZRP and the duality of these two models was studied by Borodin, Corwin and Sasamoto \cite{Borodin5}. However, the mapping between the $N$-particle $q$-TASEP and the $(N+1)$-site $q$-TAZRP  is not a bijection. Nevertheless, there is a bijection from the state space $\mathbf{S}$ of the $N$-particle \textit{ZRP-like} model to the state space $\mathbf{S}'$ of the $N$-particle \textit{ASEP-like} model \cite{Lee2,Povol3} :
\begin{equation}
  T: (x_i) \in \mathbf{S} \mapsto (x_i -i) \in \mathbf{S}' \label{bijection}
\end{equation}
where $x_i$ is the $i$th right-most particle's position.
 It is reasonable that  the transition probabilities of two models should be conserved under this mapping, that is,
 \begin{equation*}
    P_Y(X;t) = P_{T(Y)}(T(X);t).
 \end{equation*}
 In fact, this conservation can be confirmed by direct computation of two transition probabilities but the mapping (\ref{bijection}) is predicted at the level of finding $S$-matrices of two models (See \cite{Lee2} for the TAZRP with constant rates and the TASEP.). Under the mapping $T$, the $N$-particle $q$-TAZRP is mapped to a variant of Borodin and Corwin's $q$-TASEP,  where the jumping rate of a particle depends on the size of the cluster of particles attached to the back of the particle. For this model, the two-particle $S$-matrix is given by
 \begin{equation*}
  -\frac{z_{\alpha}}{z_{\beta}}\cdot \frac{z_{\beta} - qz_{\alpha} - (1-q)z_{\alpha}z_{\beta}}{z_{\alpha} - qz_{\beta} - (1-q)z_{\alpha}z_{\beta}}, ~~z_{\alpha},z_{\beta} \in \mathbb{C}
\end{equation*}
which is to be compared with (\ref{SMatrix}). Recalling Lemma 3 in \cite{Lee2}, then the Bethe ansatz solution corresponding to (\ref{solution}) is written as
\begin{equation}
 \sum_{\sigma \in \mathbb{S}_N} {A}_{\sigma} \prod_{i=1}^N z_{\sigma(i)}^{x_i-i + \sigma(i) - y_{\sigma(i)}-1}e^{\epsilon(z_i) t} \label{variant}
\end{equation}
where ${A}_{\sigma}$ is the same as that of the $q$-TAZRP after taking $A'_{\textrm{Id}} = \prod_{i=1}^Nz_i^{-y_i-1}.$  Obviously, we can read off the mapping $T$ in (\ref{variant}). However, if we are concerned with an infinite system, the $q$-TASEP and its variant are equivalent by the \textit{particle-hole} symmetry.
\section{The probability for the left-most particle's position }
The probability that the left-most particle is at $x$ at time $t$, $\mathbb{P}_Y(x_N(t) = x)$ is computed by summing transition probabilities over all possible states
$$(x,x+k_1,\cdots,x+k_1+\cdots + k_{N-1}),~~ k_i \in \mathbb{Z}_+.$$ That is,
\begin{eqnarray}
\mathbb{P}_Y(x_N(t) = x) & = &\sum_{k_1,\cdots,k_{N-1}=0}^{\infty} P_Y(x,x+k_1,\cdots,x+k_1+\cdots + k_{N-1};t)  \label{LEFTMOST} \\
                 & =& \Big(\frac{1}{2\pi i}\Big)^N \int_{\mathcal{C}_{r}}\cdots\int_{\mathcal{C}_{r}}\sum_{\sigma \in \mathbb{S}_N} A_{\sigma}F_{\sigma}^{(N)} \nonumber \\
                 & & \hspace{3cm} \times\Big(1-\prod_i^Nz_i\Big) \prod_{i=1}^N z_{i}^{x-y_{i} -1}e^{ \epsilon(z_i) t}dz_1\cdots dz_N\nonumber
\end{eqnarray}
where
\begin{eqnarray}
F_{\sigma}^{(N)} &:=& F(z_{\sigma(1)},\cdots,z_{\sigma(N)}) \nonumber  \\
 &=& \frac{1}{1-\prod_{i=1}^N z_i}\sum_{k_1,\cdots,k_{N-1}=0}^{\infty} (\textrm{\textit{weight}}) \cdot \prod_{j=1}^{N-1}\Bigg(\prod_{i=1}^{N-j} z_{\sigma(i)}\Bigg)^{k_{j}}.\label{series}
\end{eqnarray}
We find the explicit form of $F_{\sigma}^{(N)}$.
Since the \textit{weight} is a function of $k_1,\cdots,k_{N-1}$ in this case, we denote it by $W(k_1,\cdots,k_{N-1})$.  However, note that $W(k_1,\cdots,k_{N-1})$ depends only on whether each $k_i$ is zero or not (Recall (\ref{weight}).)  That is, for  $(k_1,\cdots,k_{N-1})$ and $(k_1',\cdots,k'_{N-1})$ if either $k_i = k'_i=0$ or $k_i,k'_i \neq 0$ for each $i$ , then $W(k_1,\cdots,k_{N-1})= W(k'_1,\cdots,k'_{N-1})$. The geometric series over indices $k_{j}$ in (\ref{series}) converge because the radius of each contour is less than 1. Let us denote the geometric series after summing over $k_i$ from  $k_{i} =1$ to $\infty$  by
\begin{equation}
\label{Tfunction} t_{i} = \frac{z_{\sigma(1)}\cdots z_{\sigma(N-i)}}{1- z_{\sigma(1)}\cdots z_{\sigma(N-i)}}.
\end{equation}
Defining
\begin{equation*}
\mathbb{I}(k_i) := \begin{cases} 1&\textrm{~~~if~~~} k_i =0 \\
t_{i}&\textrm{~~~if~~~} k_i = 1,
\end{cases}
\end{equation*}
the right-hand side of  (\ref{series}) is the sum of $2^{N-1}$ terms
\begin{equation}
\frac{1}{1-\prod_{i=1}^Nz_i} \sum_{k_1,\cdots,k_{N-1}=0,1}W(k_1,\cdots,k_{N-1})\mathbb{I}(k_1)\cdots \mathbb{I}(k_{N-1}). \label{sum3}
\end{equation}
It is more convenient to represent the sum (\ref{sum3}) over $k_1,\cdots,k_{N-1} =0,1$ by the sum over all \textit{compositions} of $N$. Notice that each $(k_1,\cdots,k_{N-1})$ where $k_i=0$ or $1$ is bijectively mapped to a \textit{composition} of $N$. For example, $(k_1,k_2,k_3) = (1,0,1)$ which represents the state $(x,x+1,x+1,x+2) =((x)^1,(x+1)^2,(x+2)^1)$ is mapped to a \textit{composition} $(1,2,1)$ of 4, and in this case the \textit{weight} (\ref{weight}) is given by $\frac{1}{[1]_q![2]_q![1]_q!}$. Hence,  there exist a unique $(k_1,\cdots,k_{N-1})$ where $k_i=0 ~\textrm{or}~1$ for a composition $(m_1,\cdots,m_n)$ of $N$. To be concrete, $(m_1,\cdots,m_n)$  corresponds to $(k_1,\cdots,k_{N-1})$ such that all $k_i = 0$ except $i=m_1,m_1+m_2,\cdots,m_1+m_2+ \cdots + m_{n-1}$. For this $(k_1,\cdots,k_{N-1})$,
\begin{equation*}
\mathbb{I}(k_1)\cdots \mathbb{I}(k_{N-1}) = t_{m_1}t_{m_1+m_2}\cdots t_{m_1+m_2+\cdots+m_{n-1}}
\end{equation*}
holds  when $n \geq2$.  If $n=1$, that is, $(m_1) =(N)$ which corresponds to  $(k_1,\cdots,k_{N-1}) = (0,\cdots,0)$, then $\mathbb{I}(k_1)\cdots \mathbb{I}(k_{N-1})=1$. Recalling that the \textit{weight} for a composition $(m_1,\cdots,m_n)$ of $N$ is $\frac{1}{[m_1]_q!\cdots [m_n]_q!}$, we have \begin{equation}
F_{\sigma}^{(N)} = \frac{1}{1-\prod_{i=1}^Nz_{\sigma(i)}}\sum_{(m_1,\cdots,m_n)}\frac{1}{[m_1]_q!\cdots [m_n]_q!}~t_{m_1}t_{m_1+m_2}\cdots t_{m_1+\cdots+m_{n-1}} \label{series2}
\end{equation}
where the sum is over all \textit{compositions} of $N$, understanding that $t_i$ is a $\sigma$-dependent variable.
\begin{lemma} \label{identityLemma}
  Let $T_i$ be a function on $\mathbb{S}_N$ to $\mathbb{S}_N$ defined by $T_i\sigma = \tilde{\sigma}$ such that $\tilde{\sigma}(N-i) = \sigma(N-i+1), \tilde{\sigma}(N-i+1) = \sigma(N-i)$ and $\tilde{\sigma}(k) = \sigma(k)$ for $k \neq N-i,N-i+1$, and set $T_0$ to be the identity.  Then for a fixed $\sigma \in \mathbb{S}_N$, the following holds.
\begin{equation}
\sum_{i=0}^{N-1}q^iF_{(T_i\cdots T_0)\sigma}^{(N)}\big|_{z_{\sigma(N)}=0}= F_{\sigma'}^{(N-1)} \label{LEMMA}
\end{equation}
where $\sigma'$ is a bijection on $\{1,\cdots,N-1\}$ to $\{1,\cdots,N\} \setminus \{\sigma(N)\}$.
\end{lemma}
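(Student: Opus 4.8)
The plan is to prove identity (\ref{LEMMA}) by unfolding the explicit formula (\ref{series2}) for $F_\sigma^{(N)}$ and tracking carefully how the variables $t_i$ and $A_\sigma$-free building blocks transform under the transpositions $T_i$. First I would record the key structural fact: in $F_\sigma^{(N)}$, the variable $z_{\sigma(N)}$ appears only through $t_1 = \frac{z_{\sigma(1)}\cdots z_{\sigma(N-1)}}{1-z_{\sigma(1)}\cdots z_{\sigma(N-1)}}$ — wait, more precisely through $t_{N-1}$ down to $t_1$; one must look at which products $z_{\sigma(1)}\cdots z_{\sigma(N-j)}$ contain the factor $z_{\sigma(N)}$. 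Since the $t_i$ in (\ref{Tfunction}) is $\frac{z_{\sigma(1)}\cdots z_{\sigma(N-i)}}{1-z_{\sigma(1)}\cdots z_{\sigma(N-i)}}$, the factor $z_{\sigma(N)}$ never appears among $z_{\sigma(1)},\ldots,z_{\sigma(N-i)}$ for any $i\ge 1$; it only appears in the prefactor $\frac{1}{1-\prod_i z_{\sigma(i)}}$. So I need to be careful: $F_\sigma^{(N)}\big|_{z_{\sigma(N)}=0}$ sets the prefactor to $1$ and leaves the sum over compositions unchanged in form, but the operators $T_i\cdots T_0$ for $i\le N-1$ move $\sigma(N)$ into earlier slots, so in $F_{(T_i\cdots T_0)\sigma}^{(N)}$ the variable $z_{\sigma(N)}$ genuinely does enter the $t_j$'s before we specialize it to zero.

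The core computation is then: for the permutation $\tau_i := (T_i\cdots T_0)\sigma$, one has $\tau_i(N) = \sigma(N-i-1)$ ... actually I would first verify the combinatorics of $T_i\cdots T_0$ — composing adjacent transpositions at positions $(N,N-1),(N-1,N-2),\ldots$ cyclically shifts a block, so $\tau_i$ agrees with $\sigma$ except that $\sigma(N)$ has been pulled left past $i$ entries. Then I would substitute into (\ref{series2}), set $z_{\sigma(N)}=0$, and observe that in $t_j$ for $\tau_i$, the specialization $z_{\sigma(N)}=0$ kills exactly those $t_j$ whose index range $\{1,\ldots,N-j\}$ includes the slot now occupied by $\sigma(N)$ — making those $t_j\to 0$ — while the surviving $t_j$ become precisely the $t$-variables of the $(N-1)$-variable system built from $\sigma'$. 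The upshot should be that $q^i F_{\tau_i}^{(N)}\big|_{z_{\sigma(N)}=0}$ collects, with weight $q^i$, exactly those terms of $F_{\sigma'}^{(N-1)}$ coming from compositions $(m_1,\ldots,m_n)$ of $N-1$ in which the part straddling the position where $\sigma(N)$ was inserted has a prescribed size; summing over $i$ from $0$ to $N-1$ reassembles all compositions with the correct total $q$-weighting, and the identity $\sum q^i [\text{something}] = [\text{part}]_q!/[\text{smaller}]_q!$ — i.e. a $q$-binomial-type collapse — produces the weight $\frac{1}{[m_1]_q!\cdots[m_n]_q!}$ of the $(N-1)$-composition. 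In other words the $\sum_{i=0}^{N-1}q^i$ is precisely what converts an $(N)$-level composition-with-a-marked-singleton sum into an $(N-1)$-level composition sum, using $\sum_{i=0}^{k}q^i\binom{\ldots}{\ldots}_q$ identities of the kind underlying $[n]_q! = \prod[k]_q$.

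I expect the main obstacle to be bookkeeping the $t$-variable relabeling precisely: under $T_i\cdots T_0$ the product $z_{\sigma(1)}\cdots z_{\sigma(N-j)}$ changes, and after setting $z_{\sigma(N)}=0$ one must match $t_{m_1+\cdots+m_\ell}$ of the $N$-system against $t_{m'_1+\cdots+m'_{\ell'}}$ of the $(N-1)$-system built on $\sigma'$, keeping track of the index shift caused by deleting slot $N$. I would handle this by fixing notation: write $\sigma' $ explicitly as the order-preserving relabeling of $\sigma$ restricted to $\{1,\ldots,N-1\}$ onto $\{1,\ldots,N\}\setminus\{\sigma(N)\}$, and check the claimed matching first in the smallest nontrivial case $N=2$ (where (\ref{LEMMA}) reads $F_\sigma^{(2)}\big|_{z_{\sigma(2)}=0} + q F_{T_0\sigma}^{(2)}\big|\ldots$ — here $T_0=\mathrm{id}$ and $T_1\sigma$ swaps — equals $F_{\sigma'}^{(1)} = \frac{1}{1-z_{\sigma'(1)}}$, using $[2]_q! = 1+q$), then $N=3$, to make the pattern and the role of $q^i$ transparent, and only then write the general argument. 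A secondary subtlety is that (\ref{series2}) has the prefactor $\frac{1}{1-\prod z_{\sigma(i)}}$ which vanishes in denominator sense but is set to $1$ on specialization; I would make sure the series manipulations (reindexing compositions, pulling out $t_j$ factors) are all at the level of formal/convergent geometric series on $\mathcal{C}_r$ with $r<1$, so no analytic issue arises.
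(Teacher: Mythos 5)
Your plan follows essentially the same route as the paper's proof: expand (\ref{series2}), observe that for $(T_i\cdots T_0)\sigma$ the specialization $z_{\sigma(N)}=0$ kills $t_j$ for $j\le i$ while leaving $t_j$ equal to its $\sigma$-value for $j>i$, shift indices to the $\sigma'$-system, and collapse $\sum_i q^i$ via $1+q+\cdots+q^{m_1}=[m_1+1]_q$. The one bookkeeping point your outline leaves implicit (and slightly misstates as a single ``prescribed size'' per $i$ -- for fixed $i$ all compositions of $N$ with first part exceeding $i$ survive) is that each composition $(m_1,\cdots,m_n)$ of $N-1$ collects contributions from the two compositions $(m_1+1,m_2,\cdots,m_n)$ (for $0\le i\le m_1$) and $(1,m_1,\cdots,m_n)$ (only $i=0$) of $N$, the latter supplying the $t_1$ term that combines as $1+t_1=\frac{1}{1-z_{\sigma'(1)}\cdots z_{\sigma'(N-1)}}$ to restore the prefactor of $F_{\sigma'}^{(N-1)}$ -- which is exactly how the paper organizes the sum, and which your planned $N=2,3$ checks would surface.
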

\begin{proof} Let us denote $t_i$ for $T_k\cdots T_0 \sigma$ in (\ref{Tfunction}) by $t_i^{(k)}$. Note that a composition $(m_1,\cdots,m_n)$ of $N-1$ is  mapped to a pair of two compositions of $N$,  $(m_1+1,m_2,\cdots,m_n)$ and $(1,m_1,\cdots,m_n)$.
We will show that the term for $(m_1,\cdots,m_n)$ in the right-hand side of (\ref{LEMMA}) is equal to the sum of terms for $(m_1+1,m_2,\cdots,m_n)$
and terms for $(1,m_1,\cdots,m_n)$ in the left-hand side of (\ref{LEMMA}).  The sum of two terms for $(m_1+1,m_2,\cdots,m_n)$
and $(1,m_1,\cdots,m_n)$  in $F_{T_i\cdots T_0\sigma}^{(N)}$ is
\begin{eqnarray}
& &\frac{1}{1-z_1 \cdots z_N}\Bigg(\frac{1}{[m_1+1]_q![m_2]_q!\cdots [m_n]_q!}~t_{m_1+1}^{(i)}t_{m_1+m_2+1}^{(i)}\cdots t_{m_1+\cdots+m_{n-1}+1}^{(i)} \nonumber  \\
&+ & t_1^{(i)}\frac{1}{[m_1]_q!\cdots [m_n]_q!}~t_{m_1+1}^{(i)}t_{m_1+m_2+1}^{(i)}\cdots t_{m_1+\cdots+m_{n-1}+1}^{(i)}\Bigg).\label{LemmaEq5}
\end{eqnarray}
When $i=0$, substituting $z_{\sigma(N)} =0$ in (\ref{LemmaEq5}), it  becomes
\begin{eqnarray}
& &\frac{1}{[m_1+1]_q![m_2]_q!\cdots [m_n]_q!}~t_{m_1+1}^{(0)}t_{m_1+m_2+1}^{(0)}\cdots t_{m_1+\cdots+m_{n-1}+1}^{(0)} \nonumber  \\
&+ & t_1^{(0)}\frac{1}{[m_1]_q!\cdots [m_n]_q!}~t_{m_1+1}^{(0)}t_{m_1+m_2+1}^{(0)}\cdots t_{m_1+\cdots+m_{n-1}+1}^{(0)}.\label{LemmaEq55}
\end{eqnarray}
Substituting $z_{\sigma(N)} =0$ in (\ref{LemmaEq5}) for $0<i\leq m_1$, the term for $(m_1+1,\cdots,m_n)$   is
\begin{eqnarray}
& &\frac{1}{[m_1+1]_q![m_2]_q!\cdots [m_n]_q!}~t_{m_1+1}^{(i)}t_{m_1+m_2+1}^{(i)}\cdots t_{m_1+\cdots+m_{n-1}+1}^{(i)} \nonumber\\
&=&\frac{1}{[m_1+1]_q![m_2]_q!\cdots [m_n]_q!}~t_{m_1+1}^{(0)}t_{m_1+m_2+1}^{(0)}\cdots t_{m_1+\cdots+m_{n-1}+1}^{(0)} \label{LemmaEq6}
\end{eqnarray}
where the equality is due to $t_j^{(0)} = t_j^{(1)} = \cdots = t_j^{(j-1)}$ and the term for $(1,m_1,\cdots,m_n)$ is zero because $t_1^{(i)}=0$ for all $i \geq1$. Hence, substituting $z_{\sigma(N)} =0$ in $F_{T_i\cdots T_0\sigma}^{(N)}$ for $0<i\leq m_1$, (\ref{LemmaEq5}) becomes
  \begin{equation}
     \frac{1}{[m_1+1]_q![m_2]_q!\cdots [m_n]_q!}~t_{m_1+1}^{(0)}t_{m_1+m_2+1}^{(0)}\cdots t_{m_1+\cdots+m_{n-1}+1}^{(0)}.\label{LemmaEq666}
  \end{equation}
  If $i > m_1$, then $t_{m_1+1}^{(i)} =0$ when $z_{\sigma(N)} =0$, and so (\ref{LemmaEq5}) is zero. Therefore, recalling that
  $$1+q+\cdots + q^{m_1} = [m_1+1]_q,$$
    the left-hand side of (\ref{LEMMA}) is given by
\begin{eqnarray*}
& &\frac{1}{1-z_{\sigma(1)}\cdots z_{\sigma(N-1)}}\frac{1}{[m_1]_q!\cdots [m_n]_q!}~t_{m_1+1}^{(0)}t_{m_1+m_2+1}^{(0)}\cdots t_{m_1+\cdots+m_{n-1}+1}^{(0)},
\end{eqnarray*}
where $t_i^{(0)}$s are for $\sigma$, and so replacing $t_i^{(0)}$s by the notations  for $\sigma'$, we have
\begin{eqnarray*}
& &\frac{1}{1-z_{\sigma'(1)}\cdots z_{\sigma'(N-1)}}\frac{1}{[m_1]_q!\cdots [m_n]_q!}~t_{m_1}t_{m_1+m_2}\cdots t_{m_1+\cdots+m_{n-1}},
\end{eqnarray*}
where $t_i$ are defined for $\sigma'$. This is the term for $(m_1,\cdots,m_n)$ in the right-hand side of (\ref{LEMMA}). This completes the proof.
\end{proof}
\begin{proposition}\label{NewIdentity} We have
\begin{equation}
 \sum_{\sigma \in \mathbb{S}_N} \textrm{sgn}(\sigma) \prod_{1 \leq i<j\leq N}\big(z_{\sigma(i)} - q z_{\sigma(j)} - (1-q)z_{\sigma(i)}z_{\sigma(j)}\big)F_{\sigma}^{(N)}= \frac{\prod\limits_{1 \leq i<j\leq N}(z_i - z_j)}{\prod\limits_{i=1}^N (1-z_i)}.\label{identity3}
\end{equation}
\end{proposition}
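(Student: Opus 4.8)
The plan is to prove \eqref{identity3} by induction on $N$, using Lemma \ref{identityLemma} as the engine that reduces the $N$-variable identity to the $(N-1)$-variable one. The base case $N=1$ is immediate, since $F_\sigma^{(1)} = 1/(1-z_1)$ and the product over $i<j$ is empty. For the inductive step, denote the left-hand side of \eqref{identity3} by $G^{(N)}(z_1,\dots,z_N)$. The first observation is that $G^{(N)}$ is antisymmetric in $z_1,\dots,z_N$ (standard: the product $\prod_{i<j}(z_{\sigma(i)}-qz_{\sigma(j)}-(1-q)z_{\sigma(i)}z_{\sigma(j)})F_\sigma^{(N)}$ is built so that relabeling the $z$'s permutes the summands against $\mathrm{sgn}(\sigma)$), and likewise the claimed right-hand side is antisymmetric. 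Hence it suffices to show that both sides, as polynomials-times-$\prod(1-z_i)^{-1}$, agree; equivalently, multiplying through by $\prod_i(1-z_i)$, one gets honest rational functions whose only possible poles are along $z_{\sigma(1)}\cdots z_{\sigma(j)} = 1$ coming from the $t$-variables and along $\prod_i z_i = 1$, and we must check these are spurious and then match a lower-degree piece.

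The cleaner route, and the one I would actually pursue, is to specialize a single variable to zero and invoke induction. Set $z_N = 0$ in \eqref{identity3}. On the right-hand side, $\prod_{i<j}(z_i-z_j)\big|_{z_N=0} = \big(\prod_{i=1}^{N-1} z_i\big)\prod_{1\le i<j\le N-1}(z_i-z_j)$ and $\prod_{i=1}^N(1-z_i)\big|_{z_N=0} = \prod_{i=1}^{N-1}(1-z_i)$. On the left-hand side, I would group the permutations $\sigma \in \mathbb{S}_N$ according to the value $\sigma(N) = k$, and within each group reorganize the sum over the remaining arguments using the transpositions $T_i$ of Lemma \ref{identityLemma}: the orbit $\{(T_i\cdots T_0)\sigma : 0 \le i \le N-1\}$ realizes exactly the insertion of the index $\sigma(N)$ into each of the $N$ positions, and the sign and $S$-matrix-type factors are arranged so that the coefficient collapses to the geometric weight $1 + q + \cdots + q^i$ structure. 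Concretely, after setting $z_{\sigma(N)}=0$ the factors $\prod_{i<j}(z_{\sigma(i)}-qz_{\sigma(j)}-(1-q)z_{\sigma(i)}z_{\sigma(j)})$ that involve the last slot simplify (several become just $-qz_{\sigma(j)}$ or $z_{\sigma(i)}$), producing precisely the powers of $q$ weighting $F_{(T_i\cdots T_0)\sigma}^{(N)}\big|_{z_{\sigma(N)}=0}$, so that Lemma \ref{identityLemma} applies term-by-term and the inner sum over the orbit becomes $F_{\sigma'}^{(N-1)}$ times the extra factor $\prod_{i=1}^{N-1} z_{\sigma'(i)}$ (the monomial that accounts for the $\prod z_i$ appearing on the right after $z_N=0$). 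This turns the $z_N=0$ specialization of $G^{(N)}$ into $\big(\prod_{i=1}^{N-1}z_i\big)\,G^{(N-1)}(z_1,\dots,z_{N-1})$, which by the inductive hypothesis equals $\big(\prod_{i=1}^{N-1}z_i\big)\prod_{i<j\le N-1}(z_i-z_j)/\prod_{i\le N-1}(1-z_i)$ — matching the right-hand side at $z_N=0$.

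To upgrade "agreement at $z_N = 0$" to "agreement everywhere," I would use the antisymmetry plus a degree/pole count. Multiply both sides of \eqref{identity3} by $\prod_{i=1}^N (1-z_i)$ and by $\prod_{1\le i<j\le N}(1 - z_iz_j\cdots)$-type denominators cleared from $F_\sigma^{(N)}$; tracking \eqref{series2}, the function $\big(\prod_i(1-z_i)\big)G^{(N)}$ has potential poles only on the hyperplanes $z_{\sigma(1)}\cdots z_{\sigma(j)}=1$, and the antisymmetrization together with Lemma \ref{lemmaTran}-style pole bookkeeping shows these cancel, so $\big(\prod_i(1-z_i)\big)G^{(N)}$ is a polynomial. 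It is antisymmetric, hence divisible by the Vandermonde $\prod_{i<j}(z_i-z_j)$, so $\big(\prod_i(1-z_i)\big)G^{(N)} = \prod_{i<j}(z_i-z_j)\cdot H(z_1,\dots,z_N)$ with $H$ symmetric. A direct degree comparison in any single variable (using that $F_\sigma^{(N)}$ contributes bounded degree once the $(1-z_i)$ factors are cleared) forces $\deg H = 0$, i.e. $H$ is a constant, and evaluating at $z_N=0$ via the previous paragraph pins that constant to $1$. The main obstacle is the middle step: verifying that the $S$-matrix factors involving the last slot degenerate, after $z_{\sigma(N)}=0$, into exactly the $q^i$ weights demanded by Lemma \ref{identityLemma} — this is the one genuinely computational point, requiring care about which inversions survive the specialization and how $\mathrm{sgn}(\sigma)$ redistributes over the orbit $\{(T_i\cdots T_0)\sigma\}$; everything else is bookkeeping of signs, degrees, and poles.
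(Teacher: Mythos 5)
Your proposal is correct and follows essentially the same route as the paper: induction on $N$, factoring the antisymmetric left-hand side as a symmetric function times the Vandermonde determinant, and specializing one variable to $0$ so that the surviving $S$-matrix factors produce the $q^k$ (with alternating sign) weights that let Lemma \ref{identityLemma} collapse each orbit $\{(T_k\cdots T_0)\sigma\}$ to $F_{\sigma'}^{(N-1)}$ times $\prod_{i}z_{\sigma'(i)}$. The only difference is how the symmetric factor is pinned to $1$: you propose a polynomiality/degree count after clearing denominators, while the paper simply evaluates the symmetric factor at $z_{\alpha}=0$ for every $\alpha\in\{1,\dots,N\}$ and concludes it is identically $1$, which avoids the pole-cancellation bookkeeping you flag as the remaining work.
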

\begin{proof}
We will prove
\begin{eqnarray}
 & &{\prod_{i=1}^N (1-z_i)}\sum_{\sigma \in \mathbb{S}_N} \textrm{sgn}(\sigma) \prod_{1 \leq i<j\leq N}\big(z_{\sigma(i)} - q z_{\sigma(j)} - (1-q)z_{\sigma(i)}z_{\sigma(j)}\big)F_{\sigma}^{(N)}\nonumber \\
 & & \hspace{2cm}= {\prod_{1 \leq i<j\leq N}(z_i - z_j)} \hspace{2cm}\textrm{for}~~z_1,\cdots,z_N \neq 1 \label{identity4}
\end{eqnarray}
   by induction. It can be shown that  (\ref{identity4}) is true for $N=2$ by direct computation. Assume that (\ref{identity4}) holds for $N-1$. Since the left-hand side of (\ref{identity4}) is an antisymmetric function of $z_1,\cdots,z_N$, it is divisible by the Vandermonde determinant, that is, the left-hand side of (\ref{identity4}) must be written as
 \begin{equation}
 G(z_1,\cdots,z_N)\prod_{1 \leq i<j \leq N}(z_i -z_j) \label{prop-eq1}
 \end{equation}
  where $G(z_1,\cdots,z_N)$ is a symmetric function of $z_1,\cdots,z_N$. We will show that $G(z_1,\cdots,z_N)=1.$ Let $\sigma_{\alpha}$ be a permutation on $\{1,\cdots,N\}$ such that
  \begin{equation*}
      \sigma_{\alpha}(i) = \begin{cases}
                       {\alpha} & \textrm{~if~} i=N\\
                       \sigma'(i) & \textrm{~if~} i \neq N
                    \end{cases}
  \end{equation*}
  where $\sigma'$ is a permutation on $\{1,\cdots,N-1\}$ to $\{1,\cdots,N\} \setminus \{\alpha\}$. Then the left-hand side of (\ref{identity4}) is written as
  \begin{eqnarray*}
   & & {\prod_{i=1}^N (1-z_i)} \sum_{\sigma' \in \mathbb{S}_{N-1}} \textrm{sgn}(\sigma')(-1)^{N-\alpha} \sum_{k=0}^{N-1}(-1)^k\\
    & & \hspace{1cm} \times  \prod_{1 \leq i<j\leq N}\big(z_{T_k\cdots T_0\sigma_{\alpha}(i)} - q z_{T_k\cdots T_0\sigma_{\alpha}(j)} - (1-q)z_{T_k\cdots T_0\sigma_{\alpha}(i)}z_{T_k\cdots T_0\sigma_{\alpha}(j)}\big)F_{T_k\cdots T_0\sigma_{\alpha}}^{(N)},
  \end{eqnarray*}
  where $(-1)^k$ implies the sign is alternating and $T_k$ is defined as in Lemma \ref{identityLemma}.
  Substituting 0 for $z_{\alpha}$, for each $k$, the second product yields a constant $(-q)^k$, and thus we have
   \begin{equation*}
   \prod_{i\neq {\alpha}}^Nz_i \prod_{i\neq {\alpha}}^N(1-z_i)\sum_{\sigma' \in \mathbb{S}_{N-1}} \textrm{sgn}(\sigma')(-1)^{N-\alpha}\prod_{1 \leq i<j\leq N-1}\big(z_{\sigma'(i)} - q z_{\sigma'(j)} - (1-q)z_{\sigma'(i)}z_{\sigma'(j)}\big)F_{\sigma'}^{(N-1)}
  \end{equation*}
by Lemma \ref{identityLemma}. This  is equal to
\begin{equation*}
 (-1)^{N-\alpha}\prod_{i\neq \alpha}^Nz_i {\prod\limits_{\substack{i<j,\\ i,j\neq \alpha}}(z_i - z_j)}
\end{equation*}
by induction hypothesis. Finally, by substituting 0 for $z_{\alpha}$ in the right-hand side of (\ref{prop-eq1}), we have
\begin{equation*}
G(z_1,\cdots,z_N)\big|_{z_{\alpha}=0} (-1)^{N-\alpha}\prod_{i\neq \alpha}^Nz_i {\prod\limits_{\substack{i<j,\\ i,j\neq \alpha}}(z_i - z_j)}
\end{equation*}
 and so we conclude that
\begin{equation}
G(z_1,\cdots,z_N)\big|_{z_{\alpha}=0} = 1. \label{symmetric}
\end{equation}
 Since $\alpha$ is arbitrary in $\{1,\cdots,N\}$, the only possible symmetric function $G(z_1,\cdots,z_N)$ that satisfies (\ref{symmetric}) is $G(z_1,\cdots,z_N) = 1.$ This completes the proof.
\end{proof}
\begin{remark} The method used in the proof of  Proposition \ref{NewIdentity} can be an alternate way to prove the identity (1.6) in \cite{TW1} where the residue computation was used.
\end{remark}
\noindent By (\ref{LEFTMOST}) and Proposition \ref{NewIdentity} the following theorem is immediately obtained.
\begin{theorem}\label{THEO} Let $0<r<\frac{1-|q|}{|1-q|}$ and  $0<|q|<1$. For the $q$-TAZRP, we have
\begin{equation}
\mathbb{P}_Y(x_N(t) = x) = \Big(\frac{1}{2\pi i}\Big)^N\int_{\mathcal{C}_{r}}\cdots \int_{\mathcal{C}_{r}} I(z_1,\cdots,z_N) \prod_{i=1}^Nz_i^{x-y_i-1}e^{\epsilon (z_i)t}dz_1\cdots dz_N \label{THEORME2}
\end{equation}
where
\begin{equation*}
I(z_1,\cdots,z_N) = \prod\limits_{i<j}^N\frac{z_i - z_j}{z_i - qz_j - (1-q)z_iz_j}\frac{1-\prod\limits_{i=1}^Nz_i}{\prod\limits_{i=1}^N(1-z_i)}
\end{equation*}
and $\epsilon(z_i)$ is given by (\ref{energe}).
\end{theorem}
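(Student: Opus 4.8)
The plan is to reduce Theorem~\ref{THEO} to the algebraic identity of Proposition~\ref{NewIdentity}, exactly as announced in the text. The starting point is the already-established expression~(\ref{LEFTMOST}): $\mathbb{P}_Y(x_N(t)=x)$ is the $N$-fold integral over $\mathcal{C}_r^{\,N}$ of the finite sum $\sum_{\sigma\in\mathbb{S}_N}A_\sigma F_\sigma^{(N)}\bigl(1-\prod_{i}z_i\bigr)\prod_{i}z_i^{x-y_i-1}e^{\epsilon(z_i)t}$, so the entire task is to simplify the factor $\sum_{\sigma\in\mathbb{S}_N}A_\sigma F_\sigma^{(N)}$.

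The first step is to rewrite $A_\sigma$ in a form matching the prefactor in Proposition~\ref{NewIdentity}. From the definition (\ref{coefficient2}) of $A_\sigma$ as the product of $S_{\beta\alpha}$ over the inversions of $\sigma$, together with the explicit $S$-matrix (\ref{SMatrix}), a short induction on $\mathrm{inv}(\sigma)$ — peeling off one adjacent transposition at a time, with the single-inversion case checked by direct substitution — gives
\begin{equation*}
A_\sigma=\mathrm{sgn}(\sigma)\,\frac{\prod_{1\le i<j\le N}\bigl(z_{\sigma(i)}-qz_{\sigma(j)}-(1-q)z_{\sigma(i)}z_{\sigma(j)}\bigr)}{\prod_{1\le i<j\le N}\bigl(z_i-qz_j-(1-q)z_iz_j\bigr)}.
\end{equation*}
The denominator is independent of $\sigma$, so it pulls out of the sum, and what remains is precisely the alternating sum on the left-hand side of (\ref{identity3}). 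Applying Proposition~\ref{NewIdentity} then yields
\begin{equation*}
\sum_{\sigma\in\mathbb{S}_N}A_\sigma F_\sigma^{(N)}=\frac{1}{\prod_{i<j}\bigl(z_i-qz_j-(1-q)z_iz_j\bigr)}\cdot\frac{\prod_{i<j}(z_i-z_j)}{\prod_{i=1}^N(1-z_i)}.
\end{equation*}

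Multiplying through by the factor $\bigl(1-\prod_{i}z_i\bigr)$ already present in (\ref{LEFTMOST}) produces exactly $I(z_1,\dots,z_N)=\prod_{i<j}\tfrac{z_i-z_j}{z_i-qz_j-(1-q)z_iz_j}\cdot\tfrac{1-\prod_i z_i}{\prod_i(1-z_i)}$, and substituting back into (\ref{LEFTMOST}) gives (\ref{THEORME2}). The remaining points are routine: the geometric series defining $F_\sigma^{(N)}$ in (\ref{series})--(\ref{series2}) converge uniformly on $\mathcal{C}_r^{\,N}$ because $r<1$, which legitimizes the term-by-term integration used to pass to (\ref{LEFTMOST}); the factors $1-z_i$ are nonzero on $\mathcal{C}_r$ (again $r<1$); and all displayed manipulations are identities of rational functions, so the possible zeros of $z_i-qz_j-(1-q)z_iz_j$ enter only as poles of the integrand, whose location relative to $\mathcal{C}_r$ is governed by Lemma~\ref{lemmaTran}. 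There is no substantive obstacle in this theorem itself — the content lies in Lemma~\ref{identityLemma} and Proposition~\ref{NewIdentity} — and the only mildly delicate bookkeeping is establishing the closed form for $A_\sigma$ displayed above.
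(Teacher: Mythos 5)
Your proposal is correct and takes essentially the same route as the paper, which obtains Theorem~\ref{THEO} directly from the expansion (\ref{LEFTMOST}) together with Proposition~\ref{NewIdentity}; your explicit closed form for $A_\sigma$ as $\mathrm{sgn}(\sigma)$ times a ratio of products is exactly the standard bridging step the paper leaves implicit, and your remarks on convergence of the geometric series and pole locations are consistent with the paper's hypotheses.
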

\noindent The following corollary states that  provided that the initial state $Y=(y,\cdots,y)$  the integrand of $\mathbb{P}_{(y,\cdots,y)}(x_N(t) \geq x)$ is represented by a determinant.
\begin{corollary}\label{COROLL}
For the $q$-TAZRP with $0<|q|<1$,  we have
\begin{equation}
\mathbb{P}_{(y,\cdots,y)}(x_N(t) \geq x) = C_N\int_{\mathcal{C}_{r}}\cdots \int_{\mathcal{C}_{r}} \det \big(K(z_i,z_j)\big) dz_1\cdots dz_N \label{Distribution}
\end{equation}
where
\begin{equation*}
K(z_i,z_j) = \frac{z_i^{x-y} e^{\epsilon (z_i)t}}{z_i -qz_j - (1-q)z_iz_j}
\end{equation*}
and
\begin{equation*}
C_N= \Big(\frac{1}{2\pi i}\Big)^N \frac{[N]_q!}{N!}\frac{(1-q)^N}{q^{N(N-1)/2}}.
\end{equation*}
\end{corollary}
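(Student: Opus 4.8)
The strategy is to specialize Theorem \ref{THEO} to $Y=(y,\dots,y)$, sum over $x'\ge x$, and then recognize the resulting integrand as a determinant. Throughout write $D(u,v):=u-qv-(1-q)uv$, so that the kernel in the statement is $K(z_i,z_j)=z_i^{x-y}e^{\epsilon(z_i)t}/D(z_i,z_j)$ and, in Theorem \ref{THEO}, $I(z_1,\dots,z_N)=\prod_{i<j}\frac{z_i-z_j}{D(z_i,z_j)}\cdot\frac{1-\prod_i z_i}{\prod_i(1-z_i)}$. Setting $Y=(y,\dots,y)$ in (\ref{THEORME2}) makes $\prod_i z_i^{x-y_i-1}=\prod_i z_i^{x-y-1}$; and since the contour radius satisfies $r<\frac{1-|q|}{|1-q|}\le 1$ (the inequality $|1-q|\ge 1-|q|$ being the triangle inequality), the tail series $\sum_{x'\ge x}\prod_i z_i^{x'-y-1}$ converges uniformly on $\mathcal{C}_r^{\,N}$ to $\frac{1}{1-\prod_i z_i}\prod_i z_i^{x-y-1}$ and may be moved inside the integral. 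The factor $\frac{1}{1-\prod_i z_i}$ cancels the $\frac{1-\prod_i z_i}{\prod_i(1-z_i)}$ in $I$, leaving
\begin{equation*}
\mathbb{P}_{(y,\dots,y)}(x_N(t)\ge x)=\Big(\frac{1}{2\pi i}\Big)^N\int_{\mathcal{C}_r}\cdots\int_{\mathcal{C}_r}\prod_{i<j}\frac{z_i-z_j}{D(z_i,z_j)}\prod_{i=1}^N\frac{z_i^{x-y-1}e^{\epsilon(z_i)t}}{1-z_i}\,dz_1\cdots dz_N.
\end{equation*}

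Because $\mathcal{C}_r^{\,N}$ is symmetric and $\prod_i\frac{z_i^{x-y-1}e^{\epsilon(z_i)t}}{1-z_i}$ is already symmetric in $z_1,\dots,z_N$, the integral is unchanged if $\prod_{i<j}\frac{z_i-z_j}{D(z_i,z_j)}$ is replaced by $\frac1{N!}\sum_{\sigma\in\mathbb{S}_N}\prod_{i<j}\frac{z_{\sigma(i)}-z_{\sigma(j)}}{D(z_{\sigma(i)},z_{\sigma(j)})}$, so it suffices to identify this symmetrized integrand with a constant multiple of $\det\big(K(z_i,z_j)\big)$. To evaluate $\det(K)$, I would factor $z_i^{x-y}e^{\epsilon(z_i)t}$ out of the $i$th row, use $D(z_i,z_j)=z_iz_j\big(\epsilon(z_j)-q\,\epsilon(z_i)\big)$ to reduce the remaining matrix to the Cauchy matrix $\big(\frac1{\epsilon(z_j)-q\,\epsilon(z_i)}\big)_{i,j}$, and apply the classical Cauchy determinant identity; using $\epsilon(z_i)-\epsilon(z_j)=\frac{z_j-z_i}{z_iz_j}$ and the diagonal value $D(z_i,z_i)=(1-q)z_i(1-z_i)$, all spurious powers of $z_i$ cancel and one obtains
\begin{equation*}
\det\big(K(z_i,z_j)\big)=\frac{(-q)^{N(N-1)/2}}{(1-q)^N}\;\prod_{i=1}^N\frac{z_i^{x-y-1}e^{\epsilon(z_i)t}}{1-z_i}\;\cdot\;\frac{\prod_{i<j}(z_i-z_j)^2}{\prod_{i<j}D(z_i,z_j)\,D(z_j,z_i)}.
\end{equation*}

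The step requiring care is to turn the trailing rational factor into the symmetrized permutation sum, and this is exactly where Lemma \ref{SimpleLemma} enters. Since $\prod_{i<j}D(z_{\sigma(i)},z_{\sigma(j)})D(z_{\sigma(j)},z_{\sigma(i)})=\prod_{i<j}D(z_i,z_j)D(z_j,z_i)$ for every $\sigma$, one has $\sum_{\sigma}\mathrm{sgn}(\sigma)\prod_{i<j}\frac1{D(z_{\sigma(i)},z_{\sigma(j)})}=\big(\prod_{i<j}D(z_i,z_j)D(z_j,z_i)\big)^{-1}\sum_{\sigma}\mathrm{sgn}(\sigma)\prod_{i<j}D(z_{\sigma(j)},z_{\sigma(i)})$, and the last numerator equals $[N]_q!\prod_{i<j}(z_j-z_i)$ by (\ref{Simple}). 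Multiplying through by $\prod_{i<j}(z_i-z_j)$ and using $\mathrm{sgn}(\sigma)\prod_{i<j}(z_i-z_j)=\prod_{i<j}(z_{\sigma(i)}-z_{\sigma(j)})$ to absorb the Vandermonde factor gives
\begin{equation*}
\frac{\prod_{i<j}(z_i-z_j)^2}{\prod_{i<j}D(z_i,z_j)D(z_j,z_i)}=\frac{(-1)^{N(N-1)/2}}{[N]_q!}\sum_{\sigma\in\mathbb{S}_N}\prod_{i<j}\frac{z_{\sigma(i)}-z_{\sigma(j)}}{D(z_{\sigma(i)},z_{\sigma(j)})},
\end{equation*}
so that $\det(K)=\frac{q^{N(N-1)/2}}{(1-q)^N[N]_q!}\prod_i\frac{z_i^{x-y-1}e^{\epsilon(z_i)t}}{1-z_i}\sum_{\sigma}\prod_{i<j}\frac{z_{\sigma(i)}-z_{\sigma(j)}}{D(z_{\sigma(i)},z_{\sigma(j)})}$. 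Integrating over $\mathcal{C}_r^{\,N}$, where each of the $N!$ permutation terms integrates to the same value after relabeling the variables, yields $\big(\tfrac{1}{2\pi i}\big)^N\int\det(K)\,dz=\frac{q^{N(N-1)/2}\,N!}{(1-q)^N[N]_q!}\,\mathbb{P}_{(y,\dots,y)}(x_N(t)\ge x)$, which is the assertion with $C_N=\big(\tfrac{1}{2\pi i}\big)^N\frac{[N]_q!}{N!}\frac{(1-q)^N}{q^{N(N-1)/2}}$. The main obstacle is precisely this bookkeeping — passing between a determinant (a signed sum of products over all indices $i$) and the Bethe-type symmetrized integrand (a sum of products over $i<j$) while tracking signs, the power of $q$, and the factor $N!/[N]_q!$; verifying the case $N=1$ (or $N=2$) against the direct computation is a useful consistency check.
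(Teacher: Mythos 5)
Your proposal is correct and follows essentially the same route as the paper: sum the geometric tail in Theorem \ref{THEO} to cancel the factor $1-\prod_i z_i$, symmetrize over $\mathbb{S}_N$ using Lemma \ref{SimpleLemma}, and evaluate $\det\big(1/(z_i-qz_j-(1-q)z_iz_j)\big)$ by a Cauchy-determinant computation (your factorization $z_i-qz_j-(1-q)z_iz_j=z_iz_j\big(\epsilon(z_j)-q\epsilon(z_i)\big)$ is exactly the paper's change of variables $z_i\to 1/(1+z_i)$). The only cosmetic difference is the direction of the final step — you expand $\det(K)$ and match it to the symmetrized integrand, while the paper symmetrizes the integrand first and then invokes the determinant identity — and your signs, the power of $q$, and the constant $C_N$ all check out.
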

\begin{proof}
First, summing $\mathbb{P}_{(y,\cdots,y)}(x_N(t) = x+i)$ over  $i \in \mathbb{Z}_+$ to obtain $\mathbb{P}_{(y,\cdots,y)}(x_N(t) \geq x)$ and anti-symmetrizing the integrand of $\mathbb{P}_{(y,\cdots,y)}(x_N(t) \geq x)$,
 \begin{eqnarray*}
\mathbb{P}_Y(x_N(t) \geq  x) &=&\frac{[N]_q!}{N!} \Big(\frac{1}{2\pi i}\Big)^N\int_{\mathcal{C}_{r}}\cdots \int_{\mathcal{C}_{r}}\prod_{i=1}^Nz_i^{x-y-1}e^{\epsilon (z_i)t} \\
 & &\times\prod_{i \neq j} \frac{z_i -z_j}{z_i - qz_j - (1-q)z_iz_j}\frac{1}{\prod_{i=1}^N(1-z_i)}dz_1\cdots dz_N
\end{eqnarray*}
by using Lemma \ref{SimpleLemma}. By using
\begin{eqnarray*}
& & \det \Bigg(\frac{1}{z_i -qz_j -(1-q)z_iz_j}\Bigg)_{1 \leq i,j \leq N} \\
&=& \frac{q^{N(N-1)/2}}{(1-q)^N}\frac{1}{\prod_{i=1}^Nz_i(1-z_i)}\prod\limits_{i\neq j}\frac{z_i-z_j}{z_i - qz_j - (1-q)z_iz_j}, ~~q \neq 0,1
\end{eqnarray*}
which can be shown by the Cauchy determinant formula and the change of variable $z_i \rightarrow 1/(1+z_i)$, we have (\ref{Distribution}).
 \end{proof}
 \begin{remark} If $q \in (0,1)$ and the contour for $z_i$ variable is $\mathcal{C}_{r_i}$ with $0<r_N<\cdots<r_1 <1$, Theorem \ref{THEO} and  Corollary \ref{COROLL} still hold  and agree with formulas for the $N$-particle $q$-TASEP in \cite{Borodin5} under a suitable change of variables. (See Corollary 2.12, in particular, in \cite{Borodin5}.)
 \end{remark}
\noindent\textbf{Acknowledgement} \\
The authors would like to thank Alexei Borodin and Ivan Corwin for valuable comments on the earlier version of the manuscript. We also would like to express our gratitude to Antti Kupiainen for helpful comments and unstinting support. E. Lee was partially supported  by European Research Council Advanced Grant, the Natural Sciences and Engineering Research Council of Canada (NSERC), the Fonds de recherche du Qu\'{e}bec - nature et technologies  (FQRNT) and the CRM Laboratoire de Physique Math\'{e}matique. M. Korhonen was supported by Academy of Finland.


\begin{thebibliography}{99}
\bibitem{Ali} {Alimohammadi, M., Karimipour, V. and Khorrami, M.:} \emph{Exact solution of a one-parameter family of asymmetric exclusion processes,} Phys. Rev. E, \textbf{57} 6370--6376 (1998).
\bibitem{Ali2} {Alimohammadi, M., Karimipour, V. and Khorrami, M.:} \emph{A two-parameteric family of asymmetric exclusion processes and its exact solution,} J. Stat. Phys., \textbf{97} 373--394 (1999).
\bibitem{Andjel} {Andjel, E. D.:} \emph{Invariant measures for the zero range process}, Ann. Probab., \textbf{10} 525--547 (1982).
\bibitem{Borodin3} {Borodin, A. and Corwin, I.:} \emph{Macdonald processes},  Probab. Theo. Rel. Fields \textit{to appear.}
\bibitem{Borodin4} {Borodin, A., Corwin, I., Petrov, L. and Sasamoto, T.:} \emph{Spectral theory for the $q$-boson particle system,}  arXiv:1308.3475.
\bibitem{Borodin5} {Borodin, A., Corwin, I. and Sasamoto, T.:} \emph{From duality to determinants for $q$-TASEP and ASEP},  Ann. Prob \textit{to appear.}
\bibitem{Borodin2} {Borodin, A. and Ferrari, P. L.:} \emph{Large time asymptotics of growth models on space-like paths $\Rmnum{1}$:PushASEP}, Electron. J. Probab., \textbf{13} 1380--1418 (2008).
\bibitem{Borodin6} {Borodin, A., Ferrari, P.L., Pr\"{a}hofer, M., Sasamoto, T.:} \emph{Fluctuation properties of the TASEP with periodic initial configuration,} J. Stat. Phys., \textbf{129} 1055--1080 (2007).
\bibitem{Chatterjee} {Chatterjee, S. and Sch\"{u}tz, G. M.:} \emph{Determinant representation for some transition probabilites in the TASEP with second class particles}, J. Stat. Phys., \textbf{140} 900--916 (2010).
\bibitem{Corwin2} {Corwin, I.:} \emph{Private communication}, August (2013).
\bibitem{Corwin} {Corwin, I. and Petrov, L.:} \emph{The q-PushASEP: A new integrable model for traffic in 1+1 dimension},  arXiv:1308.3124.
\bibitem{Evans} {Evans, M. R.:} \emph{Nonequilibrium statistical mechanics of the zero-range process and related models}, J. Phys. A: Math. Gen., \textbf{38} R195 (2005).
\bibitem{Lee} {Lee, E.:} \emph{Distribution of a particle's position in the ASEP with the alternating initial condition}, J. Stat. Phys., \textbf{140} 635--647 (2010).
\bibitem{Lee2} {Lee, E.:} \emph{Transition probabilites of the Bethe ansatz solvable interacting particle systems,} J. Stat. Phys., \textbf{142} 643--656 (2011).
\bibitem{Lee3} {Lee, E.:} \emph{The current distribution of the multiparticle hopping asymmetric diffusion model,} J. Stat. Phys., \textbf{149} 50--72 (2012).
\bibitem{Macdonald} {Macdonald, I. G.:} \emph{Symmetric functions and Hall polynomials,} 2nd ed. Oxford Iniversity Press, New York (1999).
\bibitem{Povol} {Povolotsky, A. M.:} \emph{Bethe ansatz solution for the zero-range process with nonuniform stationary state},  Phys. Rev. E, \textbf{69} 061109 (2004).
\bibitem{Povol3} {Povolotsky, A. M.:} \emph{On integrability of zero-range chipping models with factorized steady state},  arXiv.1308.3250.
\bibitem{Povol2} {Povolotsky, A. M.,  Priezzhev, V. B. and  Hu, C. -K.:} \emph{The asymmetric avalanche process}, J. Stat. Phys., \textbf{111} 1149--1182 (2003).
\bibitem{Rakos} { R\'{a}kos, A. and  Sch\"{u}tz, G. M.:} \emph{Current distribution and random matrix ensembles for an integrable asymmetric fragmmentation process,} J. Stat. Phys., \textbf{118} 511--530 (2005).
\bibitem{Rakos2} { R\'{a}kos, A. and  Sch\"{u}tz, G. M.:} \emph{Bethe ansatz and current distribution for the TASEP with particle-dependent hopping rates,} Markov Processes Relat. Fields, \textbf{12} 323--334 (2006).
\bibitem{Sasamoto2} {Sasamoto, T. and  Wadati, M.:} \emph{Exact results for one-dimensional totally asymmetric diffusion models}, J. Phys. A, \textbf{31} 6057--6071 (1998).
\bibitem{SasaWada1998Oct} {Sasamoto, T. and Wadati, M.:} \emph{One-dimensional asymmetric diffusion model without exclusion}, Phys. Rev. E, \textbf{58} 4181--4190 (1998).
\bibitem{Schutz} { Sch\"{u}tz, G. M.:} \emph{Exact solution of the master equation for the asymmetric exclusion process}, J.  Stat. Phys., \textbf{88} 427--445 (1997).
\bibitem{Spitzer} {Spitzer, F.:} \emph{Interaction of Markov processes}, Adv. Math., \textbf{5} 246--290 (1970).
\bibitem{TW1} {Tracy, C. A. and Widom, H.:} \emph{Integral formulas for the asymmetric simple exclusion process,} Commun. Math. Phys., \textbf{279} 815--844 (2008).
\bibitem{TW3} { Tracy, C. A. and  Widom, H.:} \emph{Asymptotics in ASEP with step initial condition,} Commun. Math. Phys., \textbf{290} 129--154 (2009).
\bibitem{TW2} {Tracy, C.A., Widom, H.:} \emph{On ASEP with Step Bernoulli Initial Condition,} J. Stat. Phy., \textbf{137} 825--838 (2009).
\bibitem{TW6} { Tracy, C.A.,  Widom, H.:} {Erratum to: Integral formulas for the asymmetric simple exclusion process,} Commum. Math. Phys., \textbf{304} 875--878 (2011)
\end{thebibliography}
\end{document}